\newtheorem{Theorem}{Theorem}
\newtheorem{Proposition}{Proposition}
\newtheorem{Lemma}{Lemma}
\newtheorem{Assumption}{Assumption}
    \def\Complex{{\rm\rule[.23ex]{.03em}{1.1ex}\kern-.3em{C}}}
    \newcommand{\be}{\begin{equation}} \newcommand{\ee}{\end{equation}}
    \newcommand{\bea}{\begin{eqnarray}} \newcommand{\eea}{\end{eqnarray}}
    \newcommand{\benum}{\begin{enumerate}} \newcommand{\eenum}{\end{enumerate}}
    \newcommand{\qg}{{\bf g}}
    \newcommand{\qh}{{\bf h}}
    \newcommand{\qq}{{\bf q}}
    \newcommand{\qs}{{\bf s}}
    \newcommand{\qv}{{\bf v}}
    \newcommand{\qx}{{\bf x}}
    \newcommand{\qy}{{\bf y}}
    \newcommand{\qz}{{\bf z}}
    \newcommand{\qA}{{\bf A}}
    \newcommand{\qB}{{\bf B}}
    \newcommand{\qC}{{\bf C}}
    \newcommand{\qD}{{\bf D}}
    \newcommand{\qE}{{\bf E}}
    \newcommand{\qG}{{\bf G}}
    \newcommand{\qH}{{\bf H}}
    \newcommand{\qI}{{\bf I}}
    \newcommand{\qQ}{{\bf Q}}
    \newcommand{\qT}{{\bf T}}
    \newcommand{\qU}{{\bf U}}
    \newcommand{\qV}{{\bf V}}
    \newcommand{\qW}{{\bf W}}
    \newcommand{\qPsi}{{\boldsymbol \Psi}}
    \newcommand{\qTheta}{{\boldsymbol \Theta}}
    \newcommand{\qLambda}{{\boldsymbol \Lambda}}
    \newcommand{\qGamma}{{\boldsymbol \Gamma}}
    \newcommand{\qOmega}{{\boldsymbol \Omega}}
    \newcommand{\bbR}{{\mathbb R}}
    \newcommand{\bbC}{{\mathbb C}}
    \newcommand{\calN}{{\mathcal N}}
    \newcommand{\diag}{{\sf diag}}
    \newcommand{\tr}{{\sf tr}}
    \newcommand{\Ex}{{\sf E}}
    \newcommand{\argmax}{\operatornamewithlimits{arg\, max}}
\begin{document}

\title{Large System Analysis of Cooperative Multi-cell Downlink Transmission via Regularized Channel Inversion with Imperfect CSIT}

\author{Jun Zhang,~\IEEEmembership{Student Member,~IEEE,}
        Chao-Kai Wen,~\IEEEmembership{Member,~IEEE,}
        Shi Jin,~\IEEEmembership{Member,~IEEE,}
        Xiqi Gao,~\IEEEmembership{Senior Member,~IEEE,}
        and~Kai-Kit Wong,~\IEEEmembership{Senior Member,~IEEE}
\thanks{The work of J. Zhang, S. Jin, and X. Q. Gao was supported by the National Natural Science Foundation of China under Grants 60925004 and 61222102, the China High-Tech 863 Plan under Grant 2012AA01A506, the Natural Science Foundation of Jiangsu Province under Grants BK2012021, and the Supporting Program for New Century Excellent Talents in University (NCET-11-0090). The work of C.-K. Wen was supported by the National Science Council, Taiwan, under grant NSC100-2221-E-110-052-MY3.
The material in this paper was presented in part at IEEE International Conference on Communications (ICC 2012), Ottawa, Canada, June 2012. }
\thanks{J. Zhang is with the National Mobile Communications Research Laboratory, Southeast University, Nanjing 210096, P. R. China, and with the Singapore University of Technology and Design, Singapore. Email: mtzhangjun@seu.edu.cn.}
\thanks{C.-K. Wen is with Institute of Communications Engineering, National Sun Yat-sen University, Taiwan. Email: ckwen@ieee.org.}
\thanks{S. Jin and X. Q. Gao are with the National Mobile Communications Research Laboratory, Southeast University, Nanjing 210096, P. R. China, Email: \{jinshi, xqgao\}@seu.edu.cn.}
\thanks{K.-K. Wong is with Department of Electronic and Electrical Engineering, University College London, UK, Email: k.wong@ee.ucl.ac.uk.}
}
\markboth{IEEE TRANSACTIONS ON WIRELESS COMMUNICATIONS ,~Vol.~$\times$, No.~$\times$, $\times\times$~2013} {Zhang \MakeLowercase{\textit{et al.}}:} 

\maketitle

\begin{abstract}
In this paper, we analyze the ergodic sum-rate of a multi-cell downlink system with base station (BS) cooperation using regularized zero-forcing (RZF) precoding. Our model assumes that the channels between BSs and users have independent spatial correlations and imperfect channel state information at the transmitter (CSIT) is available. Our derivations are based on large dimensional random matrix theory (RMT) under the assumption that the numbers of antennas at the BS and users approach to infinity with some fixed ratios. In particular, a deterministic equivalent expression of the ergodic sum-rate is obtained and is instrumental in getting insight about the joint operations of BSs, which leads to an efficient method to find the asymptotic-optimal regularization parameter for the RZF. In another application, we use the deterministic channel rate to study the optimal feedback bit allocation among the BSs for maximizing the ergodic sum-rate, subject to a total number of feedback bits constraint. By inspecting the properties of the allocation, we further propose a scheme to greatly reduce the search space for optimization. Simulation results demonstrate that the ergodic sum-rates achievable by a subspace search provides comparable results to those by an exhaustive search under various typical settings.
\end{abstract}

\begin{IEEEkeywords}
Large dimensional RMT, multi-cell cooperation, regularized zero-forcing, feedback bit allocation.
\end{IEEEkeywords}

\vspace{-.2in}
\section{Introduction}
Multi-user multiple-input multiple-output (MU-MIMO) has been well recognized as an effective means to increase capacity in the downlink \cite{Spencer-04COMMag,Gesbert-07SigMag,3GPP-LTE}. However, challenges arise in practical cellular systems where inter-cell interference remains the bottleneck limiting the achievable performance. Therefore, base station (BS) cooperation was recently proposed as a way to alleviate the issue, e.g., \cite{Karakayali06WCM,Somekh07IT,Jing08EJWCN,Somekh09IT,Boudreau09ComM,Gesbert10JSAC,Akai-Kit-10}, which is greatly motivated by the fact that BSs may be connected via high-speed backhaul links and the channel state information (CSI) and/or data and/or precoding matrices can be shared among the BSs for coordinated transmission. Such BS cooperation in the downlink can improve sum-rates and reduce outage as compared to the conventional or single-cell signal processing where the interference (often from other cells) is treated as noise.

Despite the potential, the implementation of BS cooperation faces a fundamental challenge---the availability of CSI at the transmitter side (CSIT). In frequency-division-duplex (FDD) systems,
although receivers could estimate the channel, quantize the CSI, and feed it back to the transmitter via some finite-bandwidth feedback links, CSIT will be imperfect. This is less an issue for
time-division-duplex (TDD) systems, but CSI will still be imperfect due to estimation at finite training sequence length and finite signal-to-noise ratio (SNR). As the benefits of BS cooperation
highly depend upon the quality of CSIT, recent efforts considered limited feedback models and imperfect CSIT in the design \cite{Caire-10ITA,Bhagavatula11TSP,Lee11TWC,Yuan13TWC}. The robust
beamforming based on imperfect CSIT was studied in \cite{Vucic_TSP09,Tajer_TSP11,Wang_ICASSP11,Bjornson_CAMSAP11}.

Besides the issue of availability of CSIT, other issues such as synchronization and finite capacity backhaul need significant research efforts. Nevertheless,
several testbeds for implementing the BS cooperation have recently been developed \cite{Sam-07,Irm-09,Jun-10,Irm-11} to demonstrate the feasibility of the
cooperative technique. For example, the Berlin testbed demonstrated downlink BS coordination for an FDD LTE trial system \cite{Jun-10}. Zero-forcing (ZF) precoding
based on limited CSI feedback was implemented jointly across two BSs which exchanged CSI as well as shared data over a low-latency signaling network. The BSs were
synchronized using the global positioning system (GPS).\footnote{The approaches would only be suitable for a small-scale demonstration system. A full network-wide
cooperation is yet to be seen in large-scale mobile networks. Despite this, significant gain has been reported even by forming small cooperation clusters in
large-scale networks \cite{Irm-11}.}

The success of these testbeds \cite{Sam-07,Irm-09,Jun-10,Irm-11} as well as several discussions raising in the LTE-Advanced study items \cite{R1-100820,R1-110546} has motivated us to provide further analytical results under the similar cooperative setting. Emphasis is put on providing insight into the role of the key parameters on system performance. Specifically, this paper considers a downlink system with multiple cooperative BSs serving a number of single-antenna users, in which BSs share the CSI and data via high-speed backhaul links. Also, the limited feedback and imperfect CSIT are taken into consideration. Rather than employing the ZF precoding, we consider that the BSs perform regularized ZF (RZF) \cite{Joham-02ISSSTA,Peel-05Tcom} for transmission. This is because when the channel is ill-conditioned, the achievable rates of ZF are severely compromised but RZF introduces a regularization parameter in the channel inversion to mitigate the ill-condition problem. The regularization parameter can control the amount of the introduced interference but choosing it improperly degrades the performance considerably. Ideally, one would choose the parameter to maximize the signal-to-interference plus noise ratio (SINR).

However, the system-wide SINR is a complex function of many system parameters such as channel vectors, channel-path gains, spatial correlations, imperfect CSIT, etc, which has motivated the
researchers to use some approximate SINR expressions for optimizing the regularization parameter of RZF. One promising approach to achieve this is large-system analysis by means of large
dimensional random matrix theory (RMT). Remarkably, results derived from large-system analysis also provided reliable performance predictions even for small system dimensions and at a much lower
computational cost than Monte-Carlo simulations \cite{Muharar-11ICC,Wagner12IT,Nguyen-08GLCOM,Hoydis-13JSAC,ZhangJun-13JSAC,Huang-13,He-13ICASSP}. In particular, the asymptotic-optimal
regularization parameters in the large-system limit have been obtained for independent and identically distributed (i.i.d.) channels in \cite{Nguyen-08GLCOM} and for spatially correlated
channels and imperfect CSIT in \cite{Muharar-11ICC,Wagner12IT}.

Further to the previous results, we provide a deterministic equivalent of the ergodic sum-rate for the \emph{coordinated} BS system based on large dimensional RMT. Our model takes
into account many practical factors related to the multiple BSs system. For example, channel-path gains, spatial correlations, and CSIT qualities from BSs to each user can differ. As a special
case, our contribution also complements the results of \cite{Wagner12IT} by extending the analysis to the case with multi-cell downlink coordinated systems,\footnote{For the readers'
convenience, the similar notations to those in \cite{Wagner12IT} are used.} where links have different CSIT qualities even inside a channel vector from a user to all BSs. Such extension is
nontrivial because several key manipulations for the multi-cell system with spatial correlations are required. Most importantly, our deterministic equivalent result has brought out two
fundamental applications:

\begin{itemize}
\item The deterministic equivalent sum-rate provides an efficient way to find the asymptotic-optimal regularization parameter, which by simulation results illustrates a good agreement with the optimum in terms of the ergodic sum-rates. The search of the optimal regularization parameter is a demanding process because Monte-Carlo averaging is required. Therefore, we have overcome the fundamental difficulty of applying RZF precoding in the multi-cell downlink coordinated systems.

\item In conventional single-cell processing in FDD mode, each user compares its channel vector with a predefined codebook and subsequently
 feeds back the channel index to its serving BS \cite{Au-Yeung06TWC,Jindal06IT}. Extending the technique to the case with BS cooperation would require that
 each user compares its cooperating user-BS channel pairs with a predefined codebook and then feeds back the channel indices to a single BS. The channel
 indices are then forwarded to other BSs. A \emph{single} imperfect CSIT is shared to all the BSs (see Fig.~\ref{fig:system model}).\footnote{Note that the
 CSIT configuration in this paper is different from that of \cite{Kerret-11,Kerret-13IT,Kerret-13WCM} where every BS has its \emph{own} CSIT and the CSIT of
 each BS is used to generate its own precoding locally without additional communication between BSs. This setting referred to as the \emph{distributed} CSIT
 \cite{Kerret-11,Kerret-13IT,Kerret-13WCM} is not considered in this paper.} Intuitively, those user-BS pairs with weaker channel-path gains should not require
 the same number of quantization bits as those with stronger gains. In this paper, we address the fundamental CSI feedback problem: given a total number of
 feedback bits of each user, how the feedback quantization bits be allocated among the different channel vectors.\footnote{The feedback allocation consists in
 the optimization of the bits to the different channel vectors, not to the different transmitters because every BS has the same CSIT. } This is an important
 problem, but has received little attention. Adaptive bit allocation of CSI feedback in a multi-cell system was studied in
 \cite{Bhagavatula11TSP,Lee11TWC,Yuan13TWC} but ZF/RZF precoding was not used. The feedback bit allocation problem of RZF precoding under general channel
 scenarios has not been investigated before. By inspecting the properties of feedback bit allocation, we devise a subspace method to greatly reduce the number
 of bit combinations and hence the search complexity. Computer simulations are conducted to reveal that the ergodic sum-rates by the subspace search provides
 comparable results to those by an exhaustive search for various settings.
\end{itemize}

\begin{figure}
\centering
\includegraphics[width=0.45\textwidth]{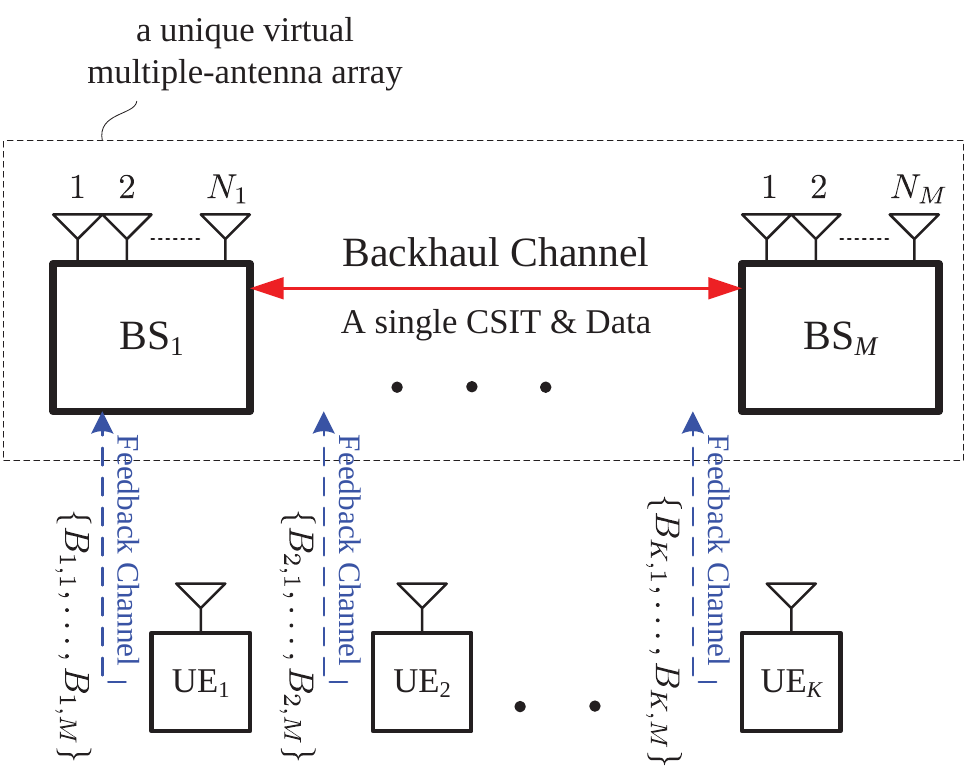}
\caption{System model of limited feedback.}\label{fig:system model}
\end{figure}

\emph{Notations}---We use uppercase and lowercase boldface letters to denote matrices and vectors, respectively. An $N \times N$ identity matrix is denoted by
${\bf{I}}_N$, while an all-zero matrix is denoted by ${\bf 0}$, and an all-one matrix by ${\bf 1}$. The superscripts $(\cdot)^{H}$, $(\cdot)^{T}$, and
$(\cdot)^{*}$ denote the conjugate-transpose, transpose, and conjugate operations, respectively. $\Ex\{\cdot\}$ returns the expectation with respect to all random
variables within the bracket. We use $[{\bf A}]_{kl}$ or the lower-case representation $A_{kl}$ to denote the ($k$,$l$)-th entry of the matrix $\bf A$, and $a_k$
denotes the $k$-th entry of the column vector $\bf{a}$. The operators $(\cdot)^{\frac{1}{2}}$, $(\cdot)^{-1}$, ${\tr}(\cdot)$ and $\det(\cdot)$ represent the
matrix principal square root, inverse, trace and determinant, respectively, and $\diag(\bf{x})$ denotes a diagonal matrix with $\bf{x}$ along its main diagonal.
The notation ``$\xrightarrow{a.s.}$'' denotes the almost sure (a.s.) convergence.

\section{Model and Problem Formulation}
\subsection{Network Model}
As shown in Fig.~\ref{fig:system model}, we consider a downlink cellular network consisting of $M$ clustered BSs with $\left\{ N_1, \ldots ,N_M \right\}$ antennas, respectively, labeled as ${\sf BS}_1,\dots,{\sf BS}_M$, serving $K$ single antenna users, labeled as ${\sf UE}_1,\dots,{\sf UE}_K$. The BSs transmit their signals jointly to users by exchanging user data and/or CSI via high-speed backhaul links. The received signal ${y_k}$ of ${\sf UE}_k$ can therefore be expressed as
\begin{equation}\label{eq:the received signal of user k}
y_k = \qh_k^H \qg_k s_k + \sum_{l=1\atop l \ne k}^K \qh_k^H \qg_l s_l + z_k,
\end{equation}
where $\qh_k^H\triangleq[\qh_{k,1}^H,\dots,\qh_{k,M}^H] \in \bbC^{1\times N}$ with $N \triangleq\sum_{i=1}^MN_i$ being the total number of transmit antennas,
$\qh_{k,i}^H \in \bbC^{1 \times N_i}$ being the channel vector between ${\sf BS}_i$ and ${\sf UE}_k$, $\qg_k\triangleq[\qg_{k,1}^T, \dots ,\qg_{k,M}^T]^T$ with
$\qg_{k,i} \in \bbC^{N_i}$ being the precoding vector between ${\sf BS}_i$ and ${\sf UE}_k$, $s_k$'s are i.i.d.~complex data symbols with zero mean and unit
variance, and $z_k$ is i.i.d.~complex Gaussian noise with zero mean and variance of $\sigma^2$.

By stacking the received signals in vector form, i.e., $\qy\triangleq[ {y_1}, \dots ,{y_K}]^T \in \bbC^{K}$, we have
\begin{equation}\label{eq:the concatenated received signal vector}
    \qy= \qH \qG \qs + \qz,
\end{equation}
where we have defined $\qH \triangleq \left[ \qh_1, \dots ,\qh_K \right]^H \in \bbC^{K\times N}$, $\qG \triangleq \left[ \qg_1, \dots, \qg_K \right]
\in\bbC^{N\times K}$, $\qs \triangleq \left[{s_1}, \dots ,{s_K} \right]^T \in \bbC^{K}$, and $\qz \triangleq \left[ {z_1}, \dots ,{z_K} \right]^T \in \bbC^{K}$.
Let $\qG_i \triangleq[\qg_{1,i}, \dots, \qg_{K,i}] \in \bbC^{N_i \times K}$ denote the overall precoding matrix of ${\sf BS}_i$. It is assumed that ${\sf BS}_i$
satisfies the following transmit power constraint:
\begin{equation}\label{eq:the per-base station transmitted power constrain}
\tr\left\{ \qG_i \qG_i^H \right\} = \tr\left\{ \qE_i \qG \qG^H \qE_i \right\} \le {N_i P},
\end{equation}
for $i = 1, \dots, M$, where ${P} > 0$ is the parameter that decides on the per-antenna power budget and
\begin{equation}
\qE_i \triangleq \diag ( {0, \dots ,0,\underbrace {1, \dots ,1}_{N_i},0, \dots ,0} ).
\end{equation}
Although the conventional sum-power constraint can achieve the better performances, the used per-BS power constraints is the more practical choice for a
cellular system. We refer to \cite{Somekh09IT} for related discussions of the two power constraints.

Due to limited angular spread and insufficient antenna spacing, the effect of spatial correlation has to be considered. For this purpose, the channel vector between ${\sf BS}_i$ and ${\sf UE}_k$ is modeled as
\begin{equation}\label{eq:the correlated channel model}
    \qh_{k,i}^H = \qx_{k,i}^H \qT_{k,i}^{\frac{1}{2}},
\end{equation}
where $\qx_{k,i}^H \in \bbC^{1 \times {N_i}}$ has i.i.d.~zero-mean entries with variance of $\frac{1}{N_i}$, and $\qT_{k,i}\in \bbC^{{N_i} \times {N_i}}$ is a deterministic nonnegative definite matrix, which characterizes the spatial correlation of the transmitted signals across the antenna elements of ${\sf BS}_i$. The channel-path gain can be included in \eqref{eq:the correlated channel model} via a scaling factor. For ease of notation, we do not introduce a new parameter but absorb the channel-path gain into $\qT_{k,i}$. In addition, to establish a deterministic equivalent result, additional assumptions on $\qx_{k,i}$'s and $\qT_{k,i}$'s are required. We will collect all the assumptions relating to the deterministic equivalent result later in Assumption \ref{Ass:1}.

It is considered that only an imperfect channel $\hat{\qh}_{k,i}$ of the true channel $\qh_{k,i}$ is available at ${\sf BS}_i$ and we model this by
\cite{Dabbagh2008Tcom,Yoo2006IT,Wagner12IT,Ding2009Tsp,Wang2006Twc}
\begin{equation}
   {\hat{\qh}}_{k,i} = \qT_{k,i}^{\frac{1}{2}}\left( \sqrt {1-\tau_{k,i}^2} \qx_{k,i} + \tau_{k,i} \qv_{k,i} \right)
   \triangleq \qT_{k,i}^{\frac{1}{2}} \hat{\qx}_{k,i},   \label{eq:an imperfect estimate H_km}
\end{equation}
where $\qv_{k,i}$ has i.i.d.~zero-mean entries with variance of $\frac{1}{N_i}$ and is independent from $\qx_{k,i}$ and $z_k$. The parameter $\tau _{k,i} \in
[0,1]$ reflects the amount of uncertainty in $\hat{\qh}_{k,i}$. In particular, $\tau _{k,i}=0$ corresponds to perfect CSIT, whereas for $\tau _{k,i}=1$ the CSIT is
completely uncorrelated to the true channel. For FDD systems, the model (\ref{eq:an imperfect estimate H_km}) reflects the imperfect channel knowledge due to the
finite-bandwidth feedback links \cite{Dabbagh2008Tcom,Wagner12IT}, whereas for TDD systems, the model (\ref{eq:an imperfect estimate H_km}) reflects the imperfect
channel estimation due to finite training sequence length \cite{Yoo2006IT,Ding2009Tsp,Wang2006Twc}. We find it useful to define
\begin{equation} \label{eq:accuracyOfChannel}
\psi_{k,i} \triangleq \sqrt{1-\tau_{k,i}^2}
\end{equation}
and $\hat{\qh}_k \triangleq \left[ \hat{\qh}_{k,1}^H, \ldots ,\hat{\qh}_{k,M}^H \right]^H$, for $k = 1,\dots,K$.

In this paper, we consider the RZF precoding \cite{Joham-02ISSSTA,Peel-05Tcom} so that
\begin{equation}
    \qG = \xi \left( \hat{\qH}^H \hat{\qH} + \alpha \qI_N \right)^{-1} \hat{\qH}^H
        \triangleq \xi \hat{\qW} \hat{\qH}^H,  \label{eq:the RZF precoding}
\end{equation}
where $\hat{\qH}^H \triangleq [ \hat{\qh}_1,\hat{\qh}_2, \ldots ,\hat{\qh}_K ] \in \bbC^{N\times K}$ denotes the channel estimate available at the BSs, $\xi$ is a
normalization scalar to fulfill the per-BS transmit power constraint \eqref{eq:the per-base station transmitted power constrain}, $\alpha > 0$ represents the
regularization scalar, and $\hat{\qW} \triangleq ( \hat{\qH}^H \hat{\qH} + \alpha \qI_N )^{-1}$. Such precoding is a practical linear precoding scheme to control
inter-user interference and increase the sum rate by the regularization parameter $\alpha$ \cite{Peel-05Tcom}. It covers two well-known precoding with $\alpha=0$
being the ZF precoding, and $\alpha\rightarrow\infty$ giving the matched-filter precoding.

From \eqref{eq:the per-base station transmitted power constrain}, we also define
\begin{align*}
\xi_i^2 & \triangleq \frac{N_i  P }{\tr\left\{ \qG_i \qG_i^H \right\} }=\frac{\frac{N_i}{N} P }{\frac{1}{N} \tr\left( \qE_i \hat{\qW} \hat{\qH}^H \hat{\qH} \hat{\qW}^H \qE_i \right)}  \nonumber\\
        & \triangleq \frac{\frac{N_i}{N} P }{\Phi_i (\alpha )},
\end{align*}
for $i = 1, \dots, M$. To satisfy \eqref{eq:the per-base station transmitted power constrain}, we set $\xi^2  = \mathop {\min }\limits_i \left\{\xi_i^2
\right\}$.\footnote{ This choice may degrade the system performance. However, the performance loss can be minimized though user selection schemes. The related
effect will be discussed later in Section III.} Then, the SINR of ${\sf UE}_k$ is given by
\begin{align}
\gamma_{k}&=\frac{| \qh_k^H \hat{\qW} \hat{\qh}_k|^2 }{ \qh_k^H \hat{\qW} \hat{\qH}_{[k]}^H \hat{\qH}_{[k]} \hat{\qW}^H \qh_k + \frac{\sigma^2}{\xi^2} }  \nonumber\\
          &=\frac{| \qh_k^H \hat{\qW} \hat{\qh}_k|^2 }{ \qh_k^H \hat{\qW} \hat{\qH}_{[k]}^H \hat{\qH}_{[k]} \hat{\qW}^H \qh_k + \nu },       \label{eq:SINR}
\end{align}
where $\hat{\qH}_{[k]} \triangleq  [ \hat{\qh}_1, \dots ,\hat{\qh}_{k-1},\hat{\qh}_{k+1}, \dots ,\hat{\qh}_K ]^H \in \bbC^{N\times(K-1)}$, $\nu \triangleq\max_i
\frac{ N \Phi_i (\alpha)}{N_i\rho}$ and $\rho \triangleq \frac{ P }{\sigma^2}$. Under this model, the ergodic sum-rate can be defined as
\begin{equation}\label{eq:the ergodic sum rate}
R_{\rm{sum}}\triangleq\sum_{k=1}^{K} \Ex_{\hat{\qH}} \left\{ \log \left( 1 + \gamma_{k}\right) \right\}.
\end{equation}

\subsection{A Fundamental Problem}
The SINR, $\gamma_{k}$, in \eqref{eq:SINR} is a function of the regularization parameter $\alpha$. It has been well understood in the literature that adopting an
improper regularization parameter would degrade performance significantly \cite{Peel-05Tcom,Wagner12IT}. As a consequence, to use RZF precoding effectively, it is
important to obtain an appropriate value of $\alpha$ for best performance. In this paper, we are interested in finding the optimal regularization parameter that
maximizes the ergodic sum-rate \eqref{eq:the ergodic sum rate}. That is, we have
\begin{equation}\label{eq:optimization of the regularization parameter_exp}
 \alpha^{\rm opt} =  \argmax_{\alpha > 0} R_{\rm{sum}}.
\end{equation}
The problem above does not admit a simple closed-form solution and the solution needs to be computed via a one-dimensional linear search, such as the golden section search \cite[Chapter 7]{Chong-08BOOK}. However, Monte-Carlo averaging over the channels is required to evaluate the ergodic sum-rate \eqref{eq:the ergodic sum rate} for each choice of $\alpha$, which, unfortunately, makes the overall computational complexity prohibitive and this drawback hinders the development of RZF precoding in the multi-cell downlink channel. To tackle this problem, we resort to an asymptotic expression of \eqref{eq:the ergodic sum rate} in the large-system regime which we derive in the next section.

\section{Main Theoretical Results}
In this section, we present a deterministic equivalent of the SINR of the RZF precoding system. To do so, we consider the large-system regime where $N_i$'s and $K$
approach to infinity at fixed ratios $\{\beta_i = N_i/K\}_{\forall i}$ such that $0 < \lim \inf \beta_i \le \lim \sup \beta_i  < \infty$. For brevity, we simply
use $\calN \rightarrow \infty$ to represent the quantity in such limit. In addition, we impose the following assumptions in our derivations.
\begin{Assumption} \label{Ass:1}
For the channel ${\hat{\qh}}_{k,i}$ in \eqref{eq:an imperfect estimate H_km}, we have the following hypotheses for $k \in \{1,\dots,K\}$:
\begin{enumerate}
\item[1)] $\qx_k\triangleq[\qx_{k,1}^H \cdots \qx_{k,M}^H]^H \in \bbC^{N}$ and ${\bf x}_{k,i}$ has i.i.d.~zero-mean entries with variance of $\frac{1}{N_i}$ and finite $8$-th order moment.
\item[2)] $\qv_k = [\qv_{k,1}^H \cdots \qv_{k,M}^H]^H \in \bbC^{N}$ has the same statistical properties as $\qx_k$, but they are independent.
\item[3)] $\qT_k = \diag \left( \qT_{k,1}, \qT_{k,2}, \dots , \qT_{k,M} \right) \in \bbC^{N\times N}$ with $\qT_{k,i} \in \bbC^{N_i\times N_i}$'s being nonnegative definite matrices with uniformly bounded spectral norm.
\end{enumerate}
\end{Assumption}

\begin{Theorem}\label{Th: 2}
Under Assumption \ref{Ass:1}, as $\calN \rightarrow \infty$, we have $\gamma_{k} - \overline{\gamma}_{k}   \xrightarrow{a.s.} 0, ~~\mbox{for } k = 1,\dots,K$,
where
\begin{equation}\label{eq:gamma deterministic equivalent}
\overline{\gamma}_{k}   = \frac{\left(\overline{u}^{(2)}_k \right)^2}{\left( 1+\overline{u}^{(1)}_k  \right)^2\left(\overline{u}_k+\overline{\nu} \right)},
\end{equation}
with
\begin{subequations} \label{eq:u_all}
\begin{align}
    \overline{u}^{(1)}_k &= \sum_{i=1}^{M} \frac{1}{N_i} \tr \left(\qT_{k,i} \qPsi_i \right),
    \label{eq: u1_circ} \\
    \overline{u}^{(2)}_k &= \sum_{i=1}^{M} \frac{\psi_{k,i}}{N_i} \tr \left(\qT_{k,i} \qPsi_i \right),
    \label{eq: u2_circ} \\
    \overline{\nu} =& \mathop {\max }\limits_i \frac{1}{N_{i} \rho} \biggl( \tr \qPsi_i   - \alpha \tr \qPsi_i^2  + \alpha  \sum _{k=1}^{K} \dot{c}_{k,i}   \tr \left(\qPsi_i  \qT_{k,i} \qPsi_i \right) \biggr),
    \label{eq: nu_circ} \\
    \overline{u}_k &= \overline{u}^{(1)}_k  - \alpha \overline{\dot u}_k^{(1)} \nonumber\\
                    &~~- \frac{2\overline{u}^{(2)}_k \left(\overline{u}^{(2)}_k - \alpha \overline{\dot u}_k^{(2)} \right)}{1+\overline{u}^{(1)}_k }+ \frac{\left(\overline{u}^{(2)}_k \right)^2 \left(\overline{u}^{(1)}_k  - \alpha \overline{\dot u}_k^{(1)}  \right)}{\left(1+\overline{u}^{(1)}_k  \right)^2},
    \label{eq: u_circ} \\
    \overline{\dot u}_k^{(1)}& = \sum_{i=1}^{M}{\frac{1}{N_i} \left[ \tr{\left(\qT_{k,i} \qPsi_i^2 \right)} - \sum_{l=1}^{K}{ \dot{c}_{l,i}\tr{\left(\qT_{k,i} \qPsi_i  \qT_{l,i} \qPsi_i \right)}} \right]},
    \label{eq: dotu1_circ} \\
    \overline{\dot u}_k^{(2)} & = \sum_{i=1}^{M}{\frac{\psi_{k,i}}{N_i} \left[ \tr{\left(\qT_{k,i} \qPsi_i^2 \right)} - \sum_{l=1}^{K}{ \dot{c}_{l,i}\tr{\left(\qT_{k,i} \qPsi_i  \qT_{l,i} \qPsi_i \right)}} \right]}. \label{eq: dotu2_circ}
\end{align}
\end{subequations}
In \eqref{eq:u_all}, $e_{k,i}$'s are the unique solution of the following $K \times M$ equations
\begin{align}
\qPsi_{i}  &= \left( \frac{1}{N_i}\sum_{k=1}^{K}{\frac{1}{\sum_{m=1}^{M} e_{k,m} + 1} \qT_{k,i} } + \alpha \qI_{N_i} \right)^{ -1}, \label{eq: Psi i} \\
e_{k,i}  &= \frac{1}{N_i}\tr \left(\qT_{k,i} \qPsi_i \right), \label{eq: e km}
\end{align}
for $k = 1,\dots,K$ and $i = 1, \dots, M$. In addition, $\dot{\qC}   = [\dot{c}_{k,i}  ] \in \bbC^{K \times M}$ is a solution to the following linear equation:
\begin{equation}\label{eq: Theta}
\qTheta  {\sf vec}(\dot{\qC})  = {\sf vec}(\qGamma),
\end{equation}
where $\qTheta = [\qTheta_{ik}] \in \bbC^{M K\times M K}$ and $\qGamma \in \bbC^{K \times M}$, with
\begin{subequations}
\begin{align}
[\qTheta_{ik}]_{jl} &= \left\{\begin{aligned}
\frac{-1}{N_{i}} \frac{1}{N_{j}} \frac{1}{\left(\sum_{m = 1}^M e_{k,m} +1 \right)^2} \tr{\left(\qT_{k,j} \qPsi_{j} \qT_{l,j} \qPsi_{j} \right)}, & \\ \mbox{for }(i,j)\neq (l,k);& \\
1-\frac{1}{N_{i}^2}  \frac{1}{\left(\sum_{m = 1}^M e_{k,m} +1 \right)^2} \tr{\left(\left(\qT_{k,i} \qPsi_{i} \right)^2\right)},  ~~~& \\ \mbox{for }(i,j)= (l,k),&
\end{aligned}\right.\label{eq: qTheta_ik}\\
[\qGamma]_{k,i}& = -\frac{1}{N_{i}} \sum\limits_{j = 1}^M \frac{1}{N_{j}} \frac{1}{\left(\sum_{m = 1}^M e_{k,m} +1\right)^2}\tr \left(\qT_{k,j} \qPsi^2_{j} \right).\label{eq: eta_k}
\end{align}
\end{subequations}
\end{Theorem}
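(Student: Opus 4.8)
The plan is to reduce the random SINR \eqref{eq:SINR} to its deterministic equivalent by treating the numerator, the interference term, and the noise-normalization $\nu$ separately, and then recombining them through the continuous mapping theorem. The backbone is a deterministic equivalent for the resolvent $\hat{\qW}=(\hat{\qH}^H\hat{\qH}+\alpha\qI_N)^{-1}$ and its leave-one-out version $\hat{\qW}_{[k]}\triangleq(\hat{\qH}_{[k]}^H\hat{\qH}_{[k]}+\alpha\qI_N)^{-1}$. Writing $\hat{\qH}^H\hat{\qH}=\sum_{l}\hat{\qh}_l\hat{\qh}_l^H$ with $\hat{\qh}_l=\qT_l^{1/2}\hat{\qx}_l$ and $\qT_l=\diag(\qT_{l,1},\dots,\qT_{l,M})$, this is a generalized sample-covariance resolvent with a per-user correlation $\qT_l$. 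First I would adapt the standard deterministic-equivalent machinery (as in \cite{Wagner12IT}) to this block-correlation setting: prove existence and uniqueness of the fixed point $(\qPsi_i,e_{k,i})$ in \eqref{eq: Psi i}--\eqref{eq: e km} by a standard Stieltjes/contraction argument, and show that for any deterministic $\qA$ of bounded spectral norm, $\frac{1}{N}\tr\bigl(\qA(\hat{\qW}-\bar{\qW})\bigr)\xrightarrow{a.s.}0$ with $\bar{\qW}=\diag(\qPsi_1,\dots,\qPsi_M)$, and likewise for $\hat{\qW}_{[k]}$ via the rank-one perturbation lemma.

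The block-diagonal structure of $\qT_k$ lets me apply the trace lemma blockwise: since $\hat{\qx}_{k,i}=\psi_{k,i}\qx_{k,i}+\tau_{k,i}\qv_{k,i}$ has i.i.d.\ entries of variance $1/N_i$, independent across blocks, quadratic forms concentrate as $\hat{\qh}_k^H\qA\hat{\qh}_k-\sum_i\frac{1}{N_i}\tr(\qT_{k,i}\qA_{ii})\xrightarrow{a.s.}0$ (cross-block terms vanishing), while the cross form $\qh_k^H\qA\hat{\qh}_k$ picks up the extra factor $\psi_{k,i}$ per block because $\qv_k$ is independent of $\qx_k$. Applying Sherman--Morrison to isolate $\hat{\qh}_k$ gives $\qh_k^H\hat{\qW}\hat{\qh}_k=\qh_k^H\hat{\qW}_{[k]}\hat{\qh}_k/(1+\hat{\qh}_k^H\hat{\qW}_{[k]}\hat{\qh}_k)$; the trace lemma together with the first step then yields $\hat{\qh}_k^H\hat{\qW}_{[k]}\hat{\qh}_k\to\overline{u}^{(1)}_k$ and $\qh_k^H\hat{\qW}_{[k]}\hat{\qh}_k\to\overline{u}^{(2)}_k$, so the numerator converges to $(\overline{u}^{(2)}_k)^2/(1+\overline{u}^{(1)}_k)^2$.

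For the interference term I would use the identity $\hat{\qH}_{[k]}^H\hat{\qH}_{[k]}=\hat{\qW}^{-1}-\alpha\qI_N-\hat{\qh}_k\hat{\qh}_k^H$, which gives $\qh_k^H\hat{\qW}\hat{\qH}_{[k]}^H\hat{\qH}_{[k]}\hat{\qW}\qh_k=\qh_k^H\hat{\qW}\qh_k-\alpha\,\qh_k^H\hat{\qW}^2\qh_k-|\qh_k^H\hat{\qW}\hat{\qh}_k|^2$. The first piece (via Sherman--Morrison and the trace lemma) tends to $\overline{u}^{(1)}_k-(\overline{u}^{(2)}_k)^2/(1+\overline{u}^{(1)}_k)$, and the third equals the numerator. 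Similarly, $\hat{\qW}\hat{\qH}^H\hat{\qH}\hat{\qW}=\hat{\qW}-\alpha\hat{\qW}^2$ reduces the power normalization $\Phi_i(\alpha)$, hence $\nu$, to traces of $\hat{\qW}$ and $\hat{\qW}^2$ against the block projector $\qE_i$; the finite maximum over $i$ then converges to $\overline{\nu}$ as the maximum of the $M$ individual a.s.\ limits.

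The crux, and the step I expect to be hardest, is the squared-resolvent quantities $\qh_k^H\hat{\qW}^2\qh_k$ and $\tr(\qE_i\hat{\qW}^2)$, which do not follow from the first-order equivalent alone. Since $\hat{\qW}^2=-\partial_\alpha\hat{\qW}$, I would obtain them by differentiating the already-computed scalar deterministic equivalents in $\alpha$; the chain rule through $\qPsi_i$ and through the factor $1/(\sum_m e_{k,m}+1)$ produces a coupled linear system whose solution is exactly $\dot{\qC}=[\dot{c}_{k,i}]$ of \eqref{eq: Theta}, so that $\overline{\dot u}_k^{(1)}=-\partial_\alpha\overline{u}^{(1)}_k$ and $\overline{\dot u}_k^{(2)}=-\partial_\alpha\overline{u}^{(2)}_k$ and the correction in $\overline{\nu}$ emerge from differentiating $e_{k,i}=\frac{1}{N_i}\tr(\qT_{k,i}\qPsi_i)$. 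Justifying this requires (i) proving that the coefficient matrix $\qTheta$ in \eqref{eq: qTheta_ik} is invertible so $\dot{\qC}$ is well defined, and (ii) legitimizing the interchange of the large-system limit with $\partial_\alpha$, for which I would invoke analyticity of all quantities in $\alpha$ on the positive half-line (indeed on $\bbC\setminus\bbR^{-}$) together with a normal-families/Vitali argument forcing a.s.\ convergence of derivatives from that of the resolvent traces. Collecting the three pieces assembles the interference limit $\overline{u}_k=\overline{u}^{(1)}_k-\alpha\overline{\dot u}_k^{(1)}-\frac{2\overline{u}^{(2)}_k(\overline{u}^{(2)}_k-\alpha\overline{\dot u}_k^{(2)})}{1+\overline{u}^{(1)}_k}+\frac{(\overline{u}^{(2)}_k)^2(\overline{u}^{(1)}_k-\alpha\overline{\dot u}_k^{(1)})}{(1+\overline{u}^{(1)}_k)^2}$, and after checking that the denominator $\overline{u}_k+\overline{\nu}$ is bounded away from zero, the continuous mapping theorem yields $\gamma_k-\overline{\gamma}_k\xrightarrow{a.s.}0$. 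The almost-sure (rather than in-probability) mode is secured by the finite eighth-moment hypothesis in Assumption~\ref{Ass:1}, which makes the trace-lemma fluctuations summable via Borel--Cantelli.
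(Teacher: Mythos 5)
Your proposal is correct and shares the paper's overall architecture (three-way split of the SINR into signal, interference, and noise-normalization terms; leave-one-out resolvent plus trace lemmas for the signal term; the fixed-point system \eqref{eq: Psi i}--\eqref{eq: e km} from the deterministic equivalent of $\frac{1}{N}\tr(\qQ\hat{\qW})$; and the linear system \eqref{eq: Theta} arising from differentiating that fixed point). The genuine divergence is in the interference term, which is where the paper's main technical work lies. The paper expands $\qA=\qA_{[k]}+\qT_k^{1/2}(\qLambda_k\qx_k\qx_k^H\qLambda_k+\qOmega_k\qv_k\qv_k^H\qOmega_k+\qLambda_k\qx_k\qv_k^H\qOmega_k+\qOmega_k\qv_k\qx_k^H\qLambda_k)\qT_k^{1/2}$ and works directly with quadratic forms against the \emph{full} resolvent $\qA^{-1}$ and the product $\qA_{[k]}^{-1}\qA^{-1}$; since $\qA^{-1}$ depends jointly on $\qx_k$ and $\qv_k$, this forces the paper to prove a new quadratic-form lemma (its Lemma 4 in the appendix, extending Wagner et al.'s Lemma 7) giving the a.s.\ limits of $\qx^H\qU\qA^{-1}\qV\qx$ and $\qx^H\qU\qA^{-1}\qV\qv$ under such a mixed perturbation --- that lemma is the paper's claimed nontrivial extension to the multi-cell, per-block-CSIT setting. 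You instead use the identity $\hat{\qW}\hat{\qH}_{[k]}^H\hat{\qH}_{[k]}\hat{\qW}=\hat{\qW}-\alpha\hat{\qW}^2-\hat{\qW}\hat{\qh}_k\hat{\qh}_k^H\hat{\qW}$, which reduces everything to $\qh_k^H\hat{\qW}\qh_k$, the already-known signal term, and the squared-resolvent form $\qh_k^H\hat{\qW}^2\qh_k$ obtained as $-\partial_\alpha$ of the first piece via a Vitali/normal-families argument. I checked that your three pieces recombine exactly to \eqref{eq: u_circ} (using $\overline{\dot u}_k^{(j)}=-\partial_\alpha\overline{u}_k^{(j)}$, which is consistent with \eqref{eq: dotu1_circ}--\eqref{eq: dotu2_circ} and the sign conventions in \eqref{eq: Theta}), so the route closes. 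What your approach buys is that, after Sherman--Morrison, every surviving quadratic form is taken against $\qA_{[k]}^{-1}$ or its derivative, which is independent of $(\qx_k,\qv_k)$, so the standard trace lemma suffices blockwise and the paper's extended Lemma 4 is never needed; the price is that you must rigorously justify the limit--derivative interchange (the paper itself invokes this silently only for the $\nu$ term) and the invertibility of $\qTheta$, neither of which the paper establishes more carefully than you propose to. Both routes are valid; yours is arguably cleaner and makes the second-order quantities' provenance as $\alpha$-derivatives transparent, while the paper's is more elementary in that it stays entirely within real-variable resolvent identities.
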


\begin{proof}
See Appendix \ref{Appendix: Proof of Theorem 2}.
\end{proof}

An intuitive application of Theorem \ref{Th: 2} is that $\gamma_{k}$ can be approximated by its deterministic equivalent $\overline{\gamma}_{k}$. Given statistical channel knowledge, i.e.,
($\qT_{k,i}$'s, $\tau_{k,i}$'s, and $\sigma^2$), the SINR of ${\sf UE}_k$ can be approximated by Theorem \ref{Th: 2}, without knowing the actual channel realization. Using the definition of the
deterministic equivalent \cite[Definition 6.1]{Couillet-11BOOK} and the continuous mapping theorem \cite{Billingsley-95BOOK}, we have $\log \left( 1 + \gamma_{k} \right) - \log \left( 1 +
\overline{\gamma}_{k} \right) \rightarrow 0$ almost surely as $\calN \rightarrow\infty$. An approximation $\overline{R}_{\rm{sum}}$ of the ergodic sum-rate $R_{\rm{sum}}$ in \eqref{eq:the
ergodic sum rate} is obtained by replacing the instantaneous SINR $\gamma_{k}$ with its large system approximation $\overline{\gamma}_{k}$, i.e.,
\begin{equation*}
\overline{R}_{\rm{sum}} = \sum_{k=1}^{K} \log \left( 1 + \overline{\gamma}_{k} \right).
\end{equation*}
It follows that \cite{Wagner12IT}
\begin{equation*}
\frac{1}{K} \left( R_{\rm{sum}} - \overline{R}_{\rm{sum}} \right)  \xrightarrow{\calN \rightarrow\infty}0
\end{equation*}
holds true almost surely.

To derive the deterministic equivalent $\overline{\gamma}_{k}$, we first need to obtain the fixed-point solutions ${e_{k,i}}$'s which can be easily solved by iteratively solving the equations \eqref{eq: e km} and \eqref{eq: Psi i}. With ${e_{k,i}}$'s as well as $\qPsi_{i}$'s, all $\dot{c}_{k,i}$'s can be obtained by solving the simple linear equation in \eqref{eq: Theta}. Finally, substituting ${e_{k,i}}$'s, $\qPsi_{i}$'s, and $\dot{c}_{k,i}$'s into \eqref{eq:u_all}, we then yield all the required parameters and as a result get the final estimate.

It should be emphasized here that our model takes into account the BS cooperation as well as the effect of imperfect CSIT. Therefore, Theorem \ref{Th: 2} is general and can be interpreted
as a unified formula that encompasses many known results, such as \cite[Theorem 1]{Wagner12IT}. Specifically, in contrast to \cite{Wagner12IT}, the new analytical result enables us to deal
with the more general case where links have different CSIT qualities even inside a channel vector from a user to all BSs. To have a better understanding on the expression of the deterministic
equivalent, Theorem \ref{Th: 2} is applied to the following two special cases.

First, if CSIT is perfect, then in this case, from \eqref{eq:accuracyOfChannel}, we have $\tau_{k,i} = 0$ and therefore $\psi_{k,i} = 1$. Then, from
\eqref{eq:u_all}, one can easily obtain $\overline{u}^{(1)}_k = \overline{u}^{(2)}_k$ and $\overline{\dot u}_k^{(1)} = \overline{\dot u}_k^{(2)}$. As such,
$\overline{u}_k$ in \eqref{eq: u_circ} can be simplified as $\overline{u}_k = \frac{\overline{u}^{(1)}_k - \alpha \overline{\dot
u}_k^{(1)}}{(1+\overline{u}^{(1)}_k)^2}$. Substituting this result into \eqref{eq:gamma deterministic equivalent}, we finally get
\begin{equation}\label{eq:gamma deterministic equivalent_perfectCSIT}
\overline{\gamma}_{k, {\sf Perfect}} = \frac{\left(\overline{u}^{(1)}_k \right)^2}{ \left(\overline{u}^{(1)}_k - \alpha \overline{\dot u}_k^{(1)} \right) + \left( 1+\overline{u}^{(1)}_k  \right)^2\overline{\nu} }.
\end{equation}
Now, if both $\alpha$ and $\sigma^2$ (hence $\overline\nu$) are close to zero, we have $\overline{\gamma}_{k, {\sf Perfect}} \approx\overline{u}^{(1)}_k$.\footnote{The approximation can be obtained from (\ref{eq:gamma deterministic equivalent_perfectCSIT}) by naively letting $\alpha$ and $\overline\nu$ be zero. The approximation helps us better understand the complex expression although we do not have a formal argument. } Recalling from the form of $\overline{u}^{(1)}_k$ in \eqref{eq: u1_circ}, we know that it comprises of the sum of $\tr \left(\qT_{k,i} \qPsi_i \right)$ and therefore, $\tr \left(\qT_{k,i} \qPsi_i \right)$ can be viewed as the equivalent channel gain contributed by ${\sf BS}_i$ to ${\sf UE}_k$. As expected, the multi-cell cooperation appears to provide a combining-like gain. In contrast to several channel parameters such as $\rho$ and $\qT_{k,i}$, the factor $\tr \left(\qT_{k,i} \qPsi_i \right)$ not only reflects the effect due to these channel parameters but also the inter-user interference. In other words, $\tr\left(\qT_{k,i} \qPsi_i \right)$ serves as a good indicator to illustrate the channel gain from ${\sf BS}_i$ to ${\sf UE}_k$ and will later be used in our design for the optimal feedback bit allocation.

In another special case, when $M=1$, Theorem \ref{Th: 2} agrees with the results in \cite[Theorem 1]{Wagner12IT}. Although the differences between the cases with $M > 1$ and $M = 1$ appear on each factor of \eqref{eq:u_all}, we can observe $\qTheta$ in \eqref{eq: qTheta_ik} that multi-cell cooperation appears to involve correlations between the BSs.

Besides the special cases, another interesting observation from the deterministic equivalent expression is the effect due to the normalization scalar $\xi$. Recall
that we set $\xi^2 = \mathop {\min }\limits_i \left\{\xi_i^2 \right\}$. It seems that the joint RZF precoding over cooperating BSs will enforce the power of each
BS by scaling the total transmit power over all BSs according to the worst-conditioned BS and this may degrade the system performance significantly. However, via
the deterministic equivalent, we show that the effect is not a serious issue. The corresponding effect of $\xi$ on the SINR is through $\overline{\nu}$ in
(\ref{eq: nu_circ}). As can be seen, $\overline{\nu}$ is associated with $\qPsi_i$ while $\qPsi_i$ is determined by $\{ \qT_{k,i} \}_{k=1,\ldots,K}$. Since $\{
\qT_{k,i} \}$ are random, there is no specific reason why any specific BS would undergo very worst condition. It should be particularly noted that the
channel-path gain does not affect much the power normalization factor while the ill-condition of the channel does. The ill-condition of the channel would result
from users with similar channel responses. Clearly, if applying proper user selection schemes, the BSs' conditions will not diverge greatly.

\section{Applications and Simulations}
As described previously, the deterministic equivalent result in Theorem \ref{Th: 2} provides an efficient technique to estimate the ergodic sum-rate for the RZF precoding system. In Section IV-A, computer simulations are provided to evaluate the accuracy of the deterministic equivalent sum-rate $\overline{R}_{\rm{sum}}$. We will see that the deterministic equivalent is useful even for systems with finite numbers of antennas. In Section IV-B, the deterministic equivalent result will be used to determine a proper regularization parameter, which addresses the fundamental problem for RZF precoding. Finally in Section IV-C, we proceed to answer the fundamental question: How should the feedback bits be allocated among the BSs?

\subsection{Accuracy of the Deterministic Equivalent Sum-rate}
In this subsection, we present numerical results to confirm our analytical results under various settings. We will compare the analytical results \eqref{eq:gamma deterministic equivalent} in Theorem \ref{Th: 2} and Monte-Carlo simulation results obtained from averaging over a large number of i.i.d.~Rayleigh block fading channels.

Fig.~\ref{fig:fig2} shows the results of the ergodic sum-rate and the deterministic equivalent rate with $M=4, N_1= \cdots = N_4=8, K=32$ and $\qT_{k,i} = \qI$ for the following five cases: 1) $\{\tau_{k,i}^2 =0\}_{\forall k,i}$, 2) $\{\tau_{k,i}^2 =0.1\}_{\forall k,i}$, 3) $\{\tau_{k,i}^2 =0.2\}_{\forall k,i}$, 4) $\{\tau_{k,i}^2 = 0.3\}_{\forall k,i}$, and 5) $\{\tau_{k,1}^2 = 0, \tau_{k,2}^2 = 0.1, \tau_{k,3}^2 = 0.2, \tau_{k,4}^2 = 0.3\}_{\forall k}$, respectively. In this figure, we compare the cases where the regularization parameter $\alpha$ is obtained by \eqref{eq: opt alpha} with those by $\alpha = \frac{1}{M \rho \beta}$. The optimality of the regularization parameters will be discussed later in the next subsection. We see that the deterministic equivalent is accurate even for systems with finite numbers of antennas.

\begin{figure}
\centering
\includegraphics[width=0.485\textwidth]{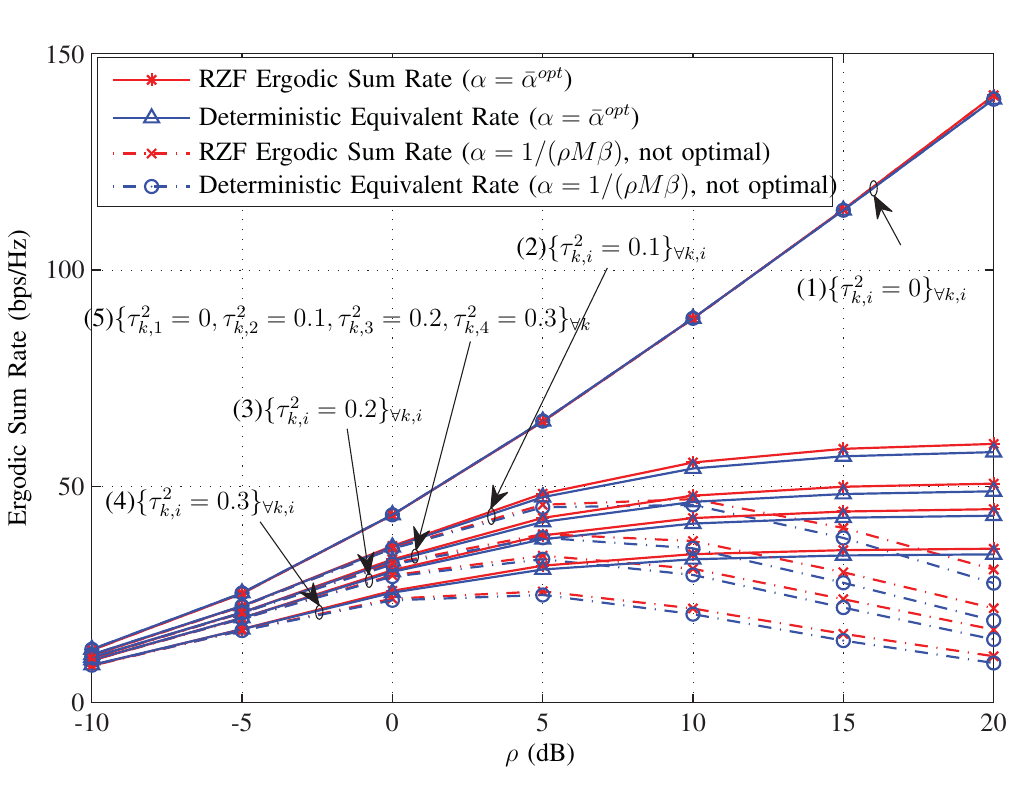}
\caption{Ergodic sum-rate and the deterministic equivalent results with $M=4, N_1= \cdots = N_4=8, K=32$ and $\qT_{k,i} = \qI$.}\label{fig:fig2}
\end{figure}

Next, we examine the accuracy of the analytical result \eqref{eq:gamma deterministic equivalent} for general settings. When ``$\tau_{k,i}^2 = {\rm rand}$'', $\tau_{k,i}^2$ is a uniform random number between $0$ and $1$, whereas $\qT_{k,i} \neq \qI$ indicates that the spatial correlation is an arbitrary pattern. The errors between the ergodic sum-rate and the deterministic equivalent rate will be different depending on $\tau_{k,i}^2$ and $\qT_{k,i}$. Therefore, the Monte-Carlo simulations are averaged over not only a large number of i.i.d.~Rayleigh block fading channels but also a large number of $\tau_{k,i}^2$ and $\qT_{k,i}$. Fig.~\ref{fig:Accuracy_M2DiffTauDiffNt} shows the relative error $\frac{R_{\rm{sum}} - \overline{R}_{\rm{sum}}}{R_{\rm{sum}}}$ with increasing number of antennas when $\tau_{k,i} = 0$ or $\tau_{k,i} = {\rm rand}$. As expected, the deterministic equivalent rate becomes more accurate if the number of antennas increases. Also, it is observed that the convergence rate becomes slow with imperfect CSIT. We can conclude from Fig.~\ref{fig:fig2} and \ref{fig:Accuracy_M2DiffTauDiffNt} that the deterministic equivalent is accurate even with finite numbers of antennas.

\begin{figure}
\centering
\includegraphics[width=0.485\textwidth]{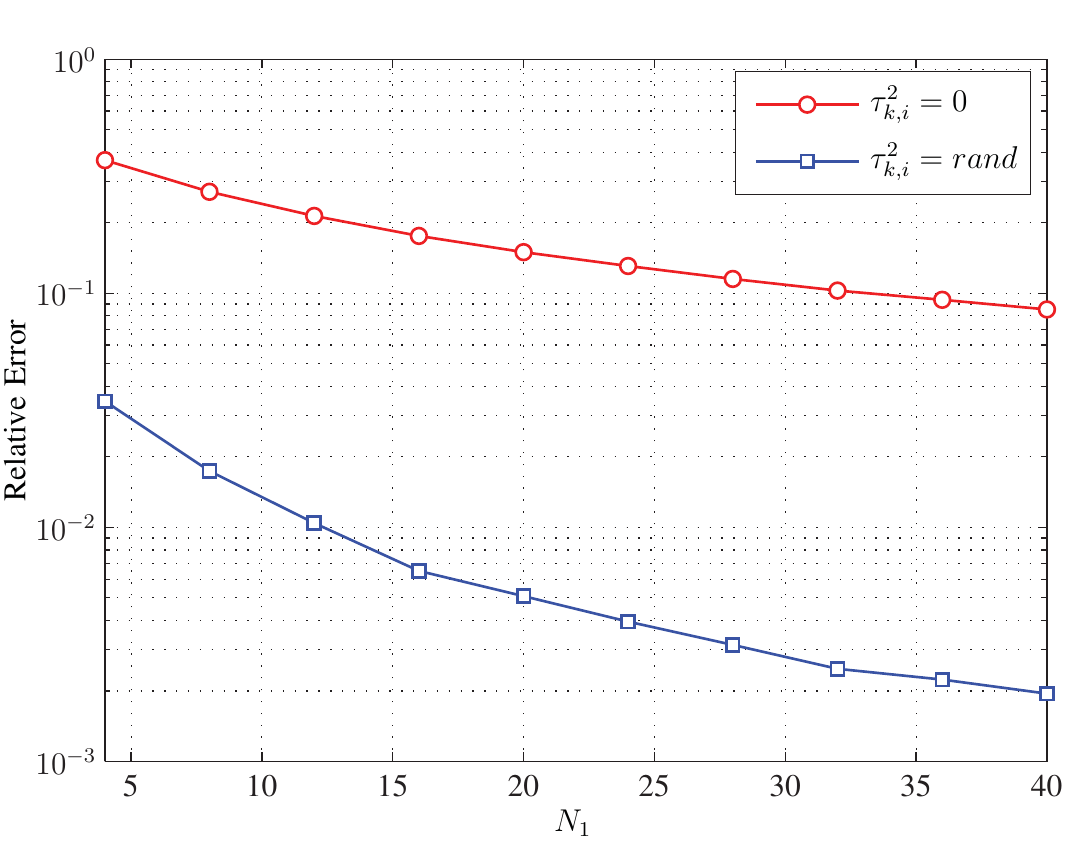}
\caption{Relative error versus the system dimension with $M=2$, $N_1=N_2$, $K=2 N_1$, $\rho = 20$dB, and $\qT_{k,i} \neq \qI$. The regularization parameters $\alpha$'s
are obtained by \eqref{eq: optimization of the regularization parameter}.}\label{fig:Accuracy_M2DiffTauDiffNt}
\end{figure}

\subsection{Sum-rate Maximizing Regularization}
Because of the high accuracy of the deterministic equivalent sum-rate, it can be used to determine the regularization parameter. Here, we focus on this particular optimization for maximizing the deterministic equivalent sum-rate:
\begin{equation}\label{eq: optimization of the regularization parameter}
\overline{\alpha}^{\rm opt} = \argmax_{\alpha > 0} \sum^K_{k=1}\log(1+\overline{\gamma}_{k}).
\end{equation}
Same as before, the optimal solution $\overline{\alpha}^{\rm opt}$ does not permit a closed-form solution. However, this time the optimal solution can be computed very efficiently via the golden section search \cite[Chapter 7]{Chong-08BOOK} without the need of Monte-Carlo averaging because $\overline{\gamma}_{k}$ is deterministic. For a special case, we obtain a solution for the optimization of the regularization parameter in the following proposition.
\begin{Proposition}\label{proposition: opt alpha}
Let $N_i = N_1 ~\forall i$, $\tau_{k,i}=\tau_i \in [0,1)$, $\qT_{k,i} = \qT$, $\forall k,i$, and denote $\beta = \frac{N_1}{K}$. The optimal $\overline{\alpha}^{\rm opt}$ in \eqref{eq: optimization of the regularization parameter} is given as a positive solution to the fixed-point equation \eqref{eq: opt alpha_1}, shown at the top of next page,
where $e_1(\alpha) = \frac{1}{N_1}\tr \left(\qT \qPsi\right), e_2(\alpha)= \frac{1}{N_1}\tr \left(\qT \qPsi^2\right), e_3(\alpha)= \frac{1}{N_1}\tr \left(\qT \qPsi\right)^2, e_4(\alpha)= \frac{1}{N_1}\tr \left(\qT \qPsi^3\right), e_5(\alpha) = \frac{1}{N_1}\tr \left(\qT^2 \qPsi^3\right)$, and
\begin{figure*}[!t]
\begin{equation}\label{eq: opt alpha_1}
\overline{\alpha}^{\rm opt} = \frac{ \left(1+\eta(\overline{\alpha}^{\rm opt}) \right) e_2(\overline{\alpha}^{\rm opt})+ M \rho \left(1 - \psi^2\right)e_3(\overline{\alpha}^{\rm opt})}{ M\beta \rho e_2(\overline{\alpha}^{\rm opt}) \left( \left(1+\eta(\overline{\alpha}^{\rm opt})\right)\psi^2 + \left(1+M e_1(\overline{\alpha}^{\rm opt})\right)^2\left(1-\psi^2\right)\eta(\overline{\alpha}^{\rm opt}) \right)}
\end{equation}
\end{figure*}
\begin{align}
\psi &=  \frac{1}{M}\sum^M_{i=1} \sqrt{1-\tau^2_i} ~~\in [0,\,1], \label{eq:delta} \\
\eta(\alpha)&=\frac{e_3e_4-e_5e_2}{Me_2^2e_3},      \label{eq:eta}          \\
\qPsi  &= \left( {\frac{1}{\beta \left(M e_1 + 1\right)} \qT } +\alpha \qI_{N_1} \right)^{ -1}.
\end{align}
\end{Proposition}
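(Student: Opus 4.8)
The plan is to exploit the full statistical symmetry of the setting to collapse the per-user deterministic equivalent into a single scalar SINR, and then to solve a one-dimensional stationarity condition. Since $N_i=N_1$, $\qT_{k,i}=\qT$, and $\tau_{k,i}=\tau_i$ for all $k$, the fixed-point system \eqref{eq: Psi i}--\eqref{eq: e km} is invariant under permutations of the user index $k$ and of the BS index $i$; by uniqueness its solution is therefore constant, $e_{k,i}=e_1(\alpha)$ and $\qPsi_i=\qPsi$ with $\qPsi=(\frac{1}{\beta(Me_1+1)}\qT+\alpha\qI_{N_1})^{-1}$, and by the same invariance the solution of the linear system \eqref{eq: Theta} is constant, $\dot c_{k,i}=\dot c(\alpha)$. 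Consequently $\overline{\gamma}_k$ in \eqref{eq:gamma deterministic equivalent} does not depend on $k$; call it $\overline{\gamma}(\alpha)$. Because $\log(1+\cdot)$ is strictly increasing, the objective in \eqref{eq: optimization of the regularization parameter} equals $K\log(1+\overline{\gamma}(\alpha))$, so the maximization reduces to $\overline{\alpha}^{\rm opt}=\argmax_{\alpha>0}\overline{\gamma}(\alpha)$, characterized by $\frac{d}{d\alpha}\overline{\gamma}(\alpha)=0$.

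First I would evaluate every ingredient of \eqref{eq:gamma deterministic equivalent} in closed form. Using $\frac{1}{N_1}\tr(\qT\qPsi)=e_1$ one gets $\overline{u}^{(1)}_k=Me_1$ and $\overline{u}^{(2)}_k=M\psi e_1=\psi\,\overline{u}^{(1)}_k$ with $\psi$ as in \eqref{eq:delta}, and the same factor relates $\overline{\dot u}_k^{(2)}=\psi\,\overline{\dot u}_k^{(1)}$. Substituting this common $\psi$ into \eqref{eq: u_circ} and completing the square, $\overline{u}_k$ factors cleanly as $(\overline{u}^{(1)}_k-\alpha\overline{\dot u}_k^{(1)})\big[(1-\psi^2)+\psi^2/(1+\overline{u}^{(1)}_k)^2\big]$, so that $\overline{\gamma}=\psi^2(\overline{u}^{(1)}_k)^2\big/\big[(1+\overline{u}^{(1)}_k)^2(\overline{u}_k+\overline{\nu})\big]$. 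The crucial simplification throughout is that $\qPsi$ is a function of $\qT$ and hence commutes with it, so every trace reduces to a scalar average of $\lambda^p/(g\lambda+\alpha)^q$ over the eigenvalues $\lambda$ of $\qT$ (with $g=\frac{1}{\beta(Me_1+1)}$), and these averages are exactly $e_1,\dots,e_5$. In particular the resolvent identity $\alpha\qPsi=\qI_{N_1}-g\qT\qPsi$ collapses $\tr\qPsi-\alpha\tr\qPsi^2$ to $g\,\tr(\qT\qPsi^2)$, which together with $\sum_k\dot c_{k,i}=K\dot c$ turns \eqref{eq: nu_circ} into an explicit expression for $\overline{\nu}$ in terms of $e_2$, $g$, and $\dot c$.

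Next I would identify the dotted quantities with $\alpha$-derivatives of the fixed point. Differentiating \eqref{eq: e km} through $\qPsi$ (its explicit $\alpha$ and its implicit $\alpha$ via $e_1$), a short computation gives $\frac{de_1}{d\alpha}=-e_2\beta(Me_1+1)^2/[\beta(Me_1+1)^2-Me_3]$, and one checks that this reproduces the constant solution $\dot c$ of \eqref{eq: Theta}, i.e. $\overline{\dot u}_k^{(1)}=-M\,\frac{de_1}{d\alpha}$; this lets me read \eqref{eq: Theta} as the differentiated fixed-point relation and avoids inverting $\qTheta$ directly. Imposing $\frac{d}{d\alpha}\log\overline{\gamma}=0$, equivalently $\frac{d}{d\alpha}\big[2\log\overline{u}^{(1)}_k-2\log(1+\overline{u}^{(1)}_k)-\log(\overline{u}_k+\overline{\nu})\big]=0$, produces the stationarity equation. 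Differentiating $\overline{u}_k$ and $\overline{\nu}$ brings in the $\alpha$-derivatives of the $\qPsi^2$-traces, hence $e_4=\frac{1}{N_1}\tr(\qT\qPsi^3)$ and $e_5=\frac{1}{N_1}\tr(\qT^2\qPsi^3)$; the exact combination $e_3e_4-e_5e_2$ is precisely what the auxiliary scalar $\eta$ in \eqref{eq:eta} packages, so that after collecting terms the stationarity condition becomes affine in $\alpha$ and can be solved to give \eqref{eq: opt alpha_1}. A useful check on the algebra is the perfect-CSIT limit $\psi\to1$, in which the $\eta$-terms cancel and \eqref{eq: opt alpha_1} collapses to the closed form $\overline{\alpha}^{\rm opt}=1/(M\beta\rho)$ used as a benchmark in Fig.~\ref{fig:fig2}.

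The main obstacle will be the differentiation bookkeeping: $\overline{\gamma}$ depends on $\alpha$ both explicitly and implicitly through the fixed point $e_1(\alpha)$, so the stationarity condition involves a second-order derivative of $e_1$ (equivalently $\frac{d}{d\alpha}\overline{\dot u}_k^{(1)}$), and the delicate step is to verify that all resulting $e_4,e_5$ contributions assemble into the single factor $\eta$ and that the condition is genuinely affine in $\alpha$. A secondary point is to confirm that the stationary point is the global maximizer over $\alpha>0$ and that a positive root of \eqref{eq: opt alpha_1} exists, which I would argue from continuity together with the boundary behaviour of $\overline{\gamma}(\alpha)$ as $\alpha\to0^+$ and $\alpha\to\infty$.
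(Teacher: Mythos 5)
Your proposal is correct and follows essentially the same route as the paper's proof: reduce by symmetry to a single scalar SINR expressed through $e_1,\dots,e_5$, use the identities $e_1=\frac{e_3}{\beta(Me_1+1)}+\alpha e_2$ and $\frac{\partial e_1}{\partial\alpha}=-\frac{\beta(Me_1+1)^2e_2}{\beta(Me_1+1)^2-Me_3}$ to differentiate through the fixed point, and observe that the stationarity condition is affine in $\alpha$, yielding \eqref{eq: opt alpha_1}. Your factorization of $\overline{u}_k$ and the logarithmic-differentiation bookkeeping are just a tidier organization of the same computation, and your caveat about existence/uniqueness of the positive root matches the paper's own admission that this remains open.
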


\begin{proof}
The optimization of the regularization parameter $\alpha$ satisfies
\begin{equation*}
\sum^K_{k=1} \frac{1}{1+\overline{\gamma}_{k}}\frac{\partial \overline{\gamma}_{k}}{\partial \alpha} =0.
\end{equation*}
Substituting $N_i = N_1$, $\tau_{k,i}=\tau_i$, $\qT_{k,i} = \qT$, $\forall k,i$ into Theorem \ref{Th: 2}, we have
\begin{equation}\label{eq: opt gamma}
\overline{\gamma}_{k} =\frac{ M^2 \psi^2 \rho e_1 ( Me_3 + \alpha \beta e_2\left( {Me_1 + 1} \right)^2 )} { ( \left(1 +  Me_1\right)^2 \left(1 - \psi^2\right)+ \psi^2)Me_3\rho +\left( Me_1+1 \right)^2 e_2},
\end{equation}
where $\psi$ is given by \eqref{eq:delta}. Differentiating both sides of \eqref{eq: opt gamma} with respect to $\alpha$ and using the fact that
\begin{align*}
e_1&=\frac{e_3}{\beta(Me_1+1)}+\alpha e_2,  \\
\frac{\partial e_1}{\partial \alpha} &=-\frac{\beta(Me_1+1)^2e_2}{\beta(Me_1+1)^2-Me_3},
\end{align*}
we obtain
\begin{align*}
\frac{\partial \overline{\gamma}_{k}}{\partial \alpha}=&\Delta \bigg( \left( \psi^2(1+\eta)+(1+Me_1)^2\left(1-\psi^2\right)\eta\right) \alpha \nonumber \\
                        & -  \frac{ \left(1+\eta \right) e_2+ M \rho \left(1 - \psi^2\right)e_3} { M\beta \rho e_2 }  \bigg),
\end{align*}
where $\eta$ is given by \eqref{eq:eta} and $\Delta$ is given by \eqref{eq:Delta}, shown at the top of next page.
\begin{figure*}[!t]
\begin{equation} \label{eq:Delta}
\Delta = \frac{2M^2\beta\rho \overline{\gamma}_{k} e^2_1 e_3 (1+Me_1)}{\left( e_3+\alpha\beta (1+Me_1)^2e_2\right)\left( \left(\psi^2+(1+Me_1)^2\left(1-\psi^2\right)\right) \rho Me_3+ (1+Me_1)^2e_2\right)} \frac{\partial e_1}{\partial \alpha}
\end{equation}
\hrulefill
\end{figure*}
Due to the fact that $\Delta \neq 0$, the optimal $\alpha$ satisfies \eqref{eq: opt alpha_1}.
\end{proof}

A large number of simulation results strongly suggest that the fixed-point equation \eqref{eq: opt alpha_1} has a unique fixed-point. However, it is difficult to
obtain a strict proof of the existence and uniqueness, and is still an open challenge. The results in Proposition \ref{proposition: opt alpha} can be used to
analyze a number of interesting special cases, which we provide as follows:
\begin{itemize}
  \item {\em Uncorrelated channel}---When $\qT_{k,i} = \qI_{N_1}$, we have $e_2=e_3=e_1^2$ and $e_4=e_5=e_1^3$. Thus, a closed-form solution for the optimization of the regularization parameter is given by
        \begin{equation}\label{eq: opt alpha}
        \overline{\alpha}^{\rm opt} = \frac{\frac{1}{M} + \rho \left(1 - \psi^2\right)}{ \beta \rho \psi^2}.
        \end{equation}
  \item {\em Single cell}---When $M=1$, the optimal $\overline{\alpha}^{\rm opt}$ satisfies \eqref{eq: opt alpha_3}, shown at the top of next page.
        \begin{figure*}[!t]
        \begin{equation}\label{eq: opt alpha_3}
        \overline{\alpha}^{\rm opt} = \frac{ \left(1+\eta(\overline{\alpha}^{\rm opt}) \right) e_2(\overline{\alpha}^{\rm opt})+ \rho \left(1 - \psi^2\right)e_3(\overline{\alpha}^{\rm opt})}{ \beta \rho e_2(\overline{\alpha}^{\rm opt}) \left( \left(1+\eta(\overline{\alpha}^{\rm opt})\right)\psi^2 + \left(1+ e_1(\overline{\alpha}^{\rm opt})\right)^2\left(1-\psi^2\right)\eta(\overline{\alpha}^{\rm opt})\right)}
        \end{equation}
        \hrulefill
        \end{figure*}
        This agrees with the results in \cite{Wagner12IT}.
  \item {\em Perfect CSIT}---When $\tau_{k,i}= 0$, we have $\psi =1$. The optimal $\overline{\alpha}^{\rm opt}$ degenerates to
        \begin{equation}\label{eq: opt alpha_2}
        \overline{\alpha}^{\rm opt} = \frac{ 1}{ M \rho\beta }.
        \end{equation}
\end{itemize}

Using \eqref{eq: opt alpha} or \eqref{eq: opt alpha_2}, we can observe how multi-cell cooperation affects the regularization parameter. The observation is new due to our analytical result in Proposition \ref{proposition: opt alpha}. In particular, we compare $\psi_{k,i}$ in
\eqref{eq:accuracyOfChannel} with $\psi$ in \eqref{eq:delta}. It appears that multi-cell cooperation provides the average effect over the channel uncertainty. On the other hand, if the channel
uncertainty $\tau_{k,i}$ is unknown at the transmitters (hence assumed to be zero), we will have $\alpha = \frac{1}{M \rho \beta}$, which is the benchmark considered in Fig.~\ref{fig:fig2}.
Clearly, significant performance loss is observed if the imperfect CSIT is not addressed.

In Fig.~\ref{fig:optAlpha_MonteVsLarge}, we compare the ergodic sum-rate results for various regularization parameters. Here, $\qT_{k,i}$ is generated from an arbitrary pattern and $\tau_{k,i}^2$ is obtained from a uniform random number between $0$ and $1$. The best result of $\alpha^{\rm opt}$ is obtained by maximizing the ergodic sum-rate which is calculated by Monte-Carlo averaging over $10^4$ independent trials. Such direct maximization clearly leads to very high computational cost. As we can see, $\overline{\alpha}^{\rm opt}$ provides indistinguishable results to that achieved by $\alpha^{\rm opt}$, which demonstrates that the optimization based on deterministic equivalent is promising. Also, $\overline{\alpha}^{\rm opt}$ performs well even for \emph{small} system dimension and does provide a significant performance increase. Motivated by the great performance of $\overline{\alpha}^{\rm opt}$, in the sequel, we use it in the discussion for the feedback bit allocation.

\begin{figure}
\centering
\includegraphics[width=0.485\textwidth]{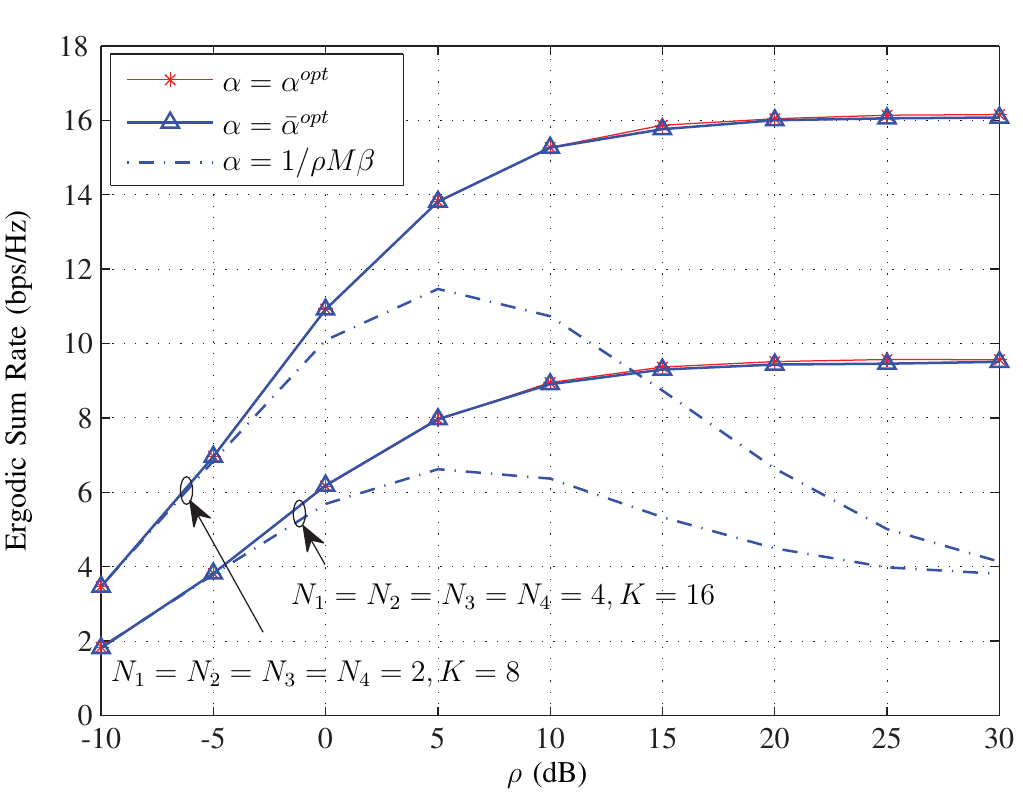}
\caption{Ergodic sum-rate results for various regularization parameters with $\tau_{k,i}^2 = {\rm rand}$ and $\qT \neq \qI$.}\label{fig:optAlpha_MonteVsLarge}
\end{figure}

\subsection{Optimal Feedback Bit Allocation}
In this subsection, we study the optimal feedback bit allocation of each user by maximizing the ergodic sum-rate in FDD systems as shown in Figure~\ref{fig:system model}. Each user quantizes the perfect channel vectors between ${\sf UE}_k$ and ${\sf BS}_i$ using $B_{k,i}$ bits \cite{Jindal06IT,Kerret-11}, and feeds them back to ${\sf BS}_i$ by the
finite-bandwidth feedback channels. We assume that the total number of feedback bits for each user is $B$. Therefore each BS obtains imperfect CSIT and all of this
CSIT can be shared among the BSs for coordinated transmission via the high-speed backhaul channels. The channel model with limited feedback is described by
\eqref{eq:an imperfect estimate H_km}, where $\tau_{k,i}$ denotes the quantization error between ${\sf UE}_k$ and ${\sf BS}_i$ and satisfies $\tau_{k,i}^2 \leq
2^{-\frac{B_{k,i}}{N_i-1}}$ \cite{Jindal06IT}. Hence, if we set $\tau_{k,i}^2 = 2^{-\frac{B_{k,i}}{N_i-1}}$, the optimal feedback bit allocation problem can be
expressed as
\begin{equation} \label{eq:optimalbit1}
\left\{ B_{k,1}^{\rm opt},\dots, B_{k,M}^{\rm opt} \right\}_{\forall k}= \argmax_{B_{k,1},\dots, B_{k,M} \in {\mathbb B}, \forall k} R_{\rm sum},
\end{equation}
where
\begin{align*}
{\mathbb B} \triangleq  \bigg\{ B_{k,i}, & \forall k,i \left| \sum^M_{i=1} B_{k,i} = B,\right.  \nonumber \\
                    & B_{k,i}\mbox{'s are non-negative integers}, \forall k \bigg\}
\end{align*}
is the feasible set of $B_{k,i}$'s. Since the number of candidates in the feasible set is finite, the optimal solution can be obtained by an exhaustive search. However, the required complexity is prohibitive when either $B$ or $M$ is large, due to the need of Monte-Carlo averaging. Instead, we propose to solve
\begin{equation} \label{eq:optimalbit2}
\left\{ \overline{B}_{k,1}^{\rm opt},\dots, \overline{B}_{k,M}^{\rm opt} \right\}_{\forall k}= \argmax_{B_{k,1},\dots, B_{k,M} \in {\mathbb B}, \forall k} \overline{R}_{\rm sum}.
\end{equation}
Note that in \eqref{eq:optimalbit2}, the asymptotic-optimal regularization parameter $\overline{\alpha}^{\rm opt}$ is adopted.

\begin{figure}
\centering
\includegraphics[width=0.485\textwidth]{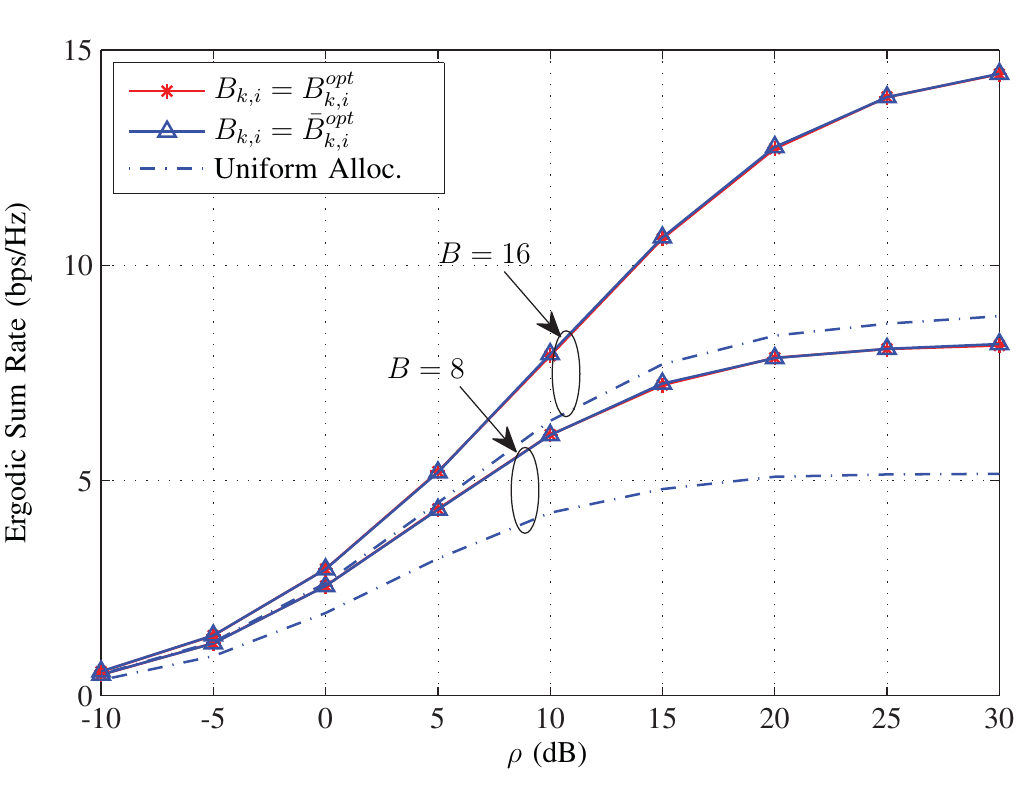}
\caption{Ergodic sum-rate results versus SNR for different bit allocations.}\label{fig:optBit_MonteVsLarge}
\end{figure}

In Fig.~\ref{fig:optBit_MonteVsLarge}, we compare the ergodic sum-rate results for using $B_{k,i}^{\rm opt}$, $\overline{B}_{k,i}^{\rm opt}$, and uniform bit allocation with $M=2$, $N_1=N_2=4$, $K=4$, $\{\qT_{k,i} = \varrho_i \qI\}_{\forall k}$, and $\frac{\varrho_2}{\varrho_1}=0.0125$. Note that as mentioned in Section II of the network model, we have absorbed the channel-path gain into $\qT_{k,i}$. It is observed that using $\{\overline{B}_{k,i}^{\rm opt}\}$ performs close to those with $\{B_{k,i}^{\rm opt}\}$ and the performance is significantly degraded if uniform allocation is adopted. Since the optimization based on the deterministic equivalent is computationally much more efficient and near-optimal, $\{\overline{B}_{k,i}^{\rm opt}\}$ will be regarded as the optimal bit allocation in the following discussions.

It is anticipated that the weaker the channel-path gain between the user and the BS, the less the CSI quantization bits should be allocated. For clarity, we start the discussion from the case with two BSs. In Fig.~\ref{fig:optBit_TotalBitVsFadRatio_Monte}, we show the ergodic sum-rate results against the inter-cell attenuation ratio $\frac{\varrho_1}{\varrho_2}$ when $M=2$, $N_1=N_2=4$, $K=4$, $\rho = 10$ dB, and $\{\qT_{k,i} = \varrho_i \qI\}_{\forall k}$. For the case when $\frac{\varrho_1}{\varrho_2}=1$, the channel-path gains between the two BSs are identical and the optimal bit allocation becomes uniform allocation. As $\frac{\varrho_1}{\varrho_2}$ increases, the performance difference between the optimal and uniform bit allocations increases. As can be seen, when the total number of feedback bits increases, the gain due to the optimal bit allocation decreases. This characteristic is reasonable, since the CSI quality of the stronger link has reached to an acceptable level if the number of feedback bits is very large. This reveals the fact that a proper bit allocation is required when the total
number of feedback bits is at a practical level, e.g., $B = 8$ and $16$.

\begin{figure}
\centering
\includegraphics[width=0.5\textwidth]{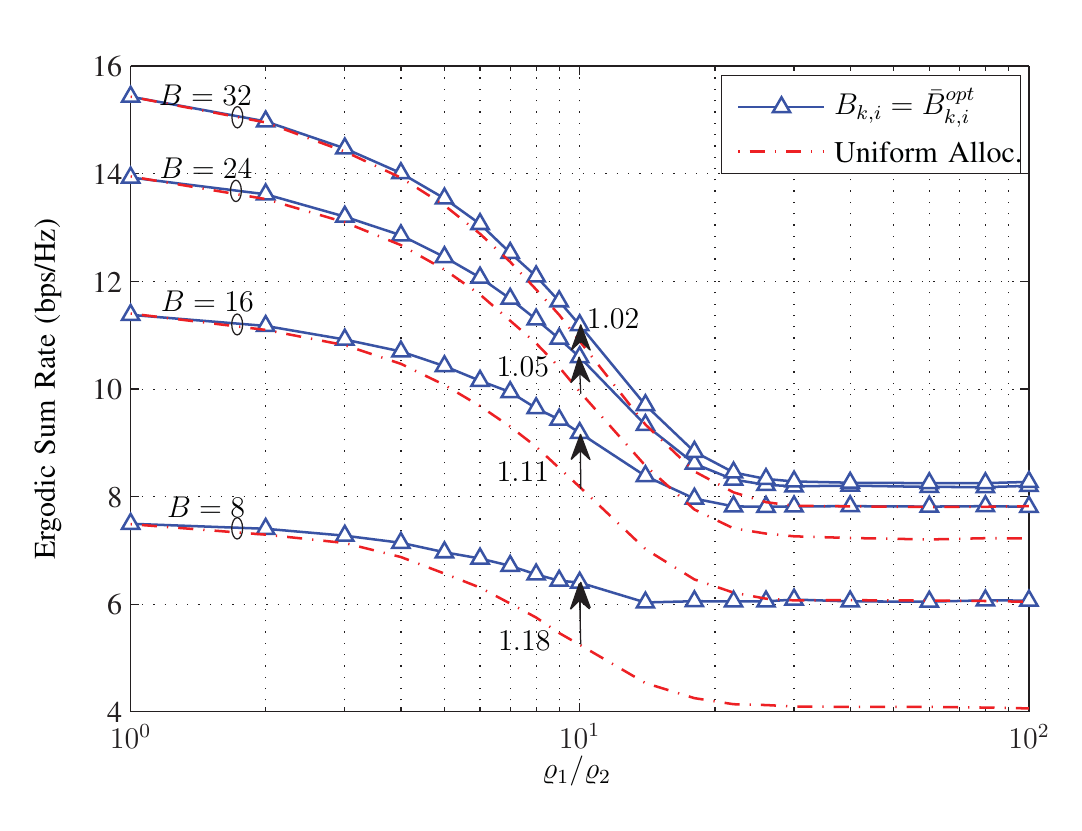}
\caption{The deterministic sum-rate results versus inter-cell ratio $\frac{\varrho_1}{\varrho_2}$ for different bit allocations when $\{ \qT_{k,i} =  \varrho_i \qI\}_{\forall k,i}$ and $\rho=$ 10dB.}\label{fig:optBit_TotalBitVsFadRatio_Monte}
\end{figure}

\begin{figure}
\centering
\includegraphics[width=0.51\textwidth]{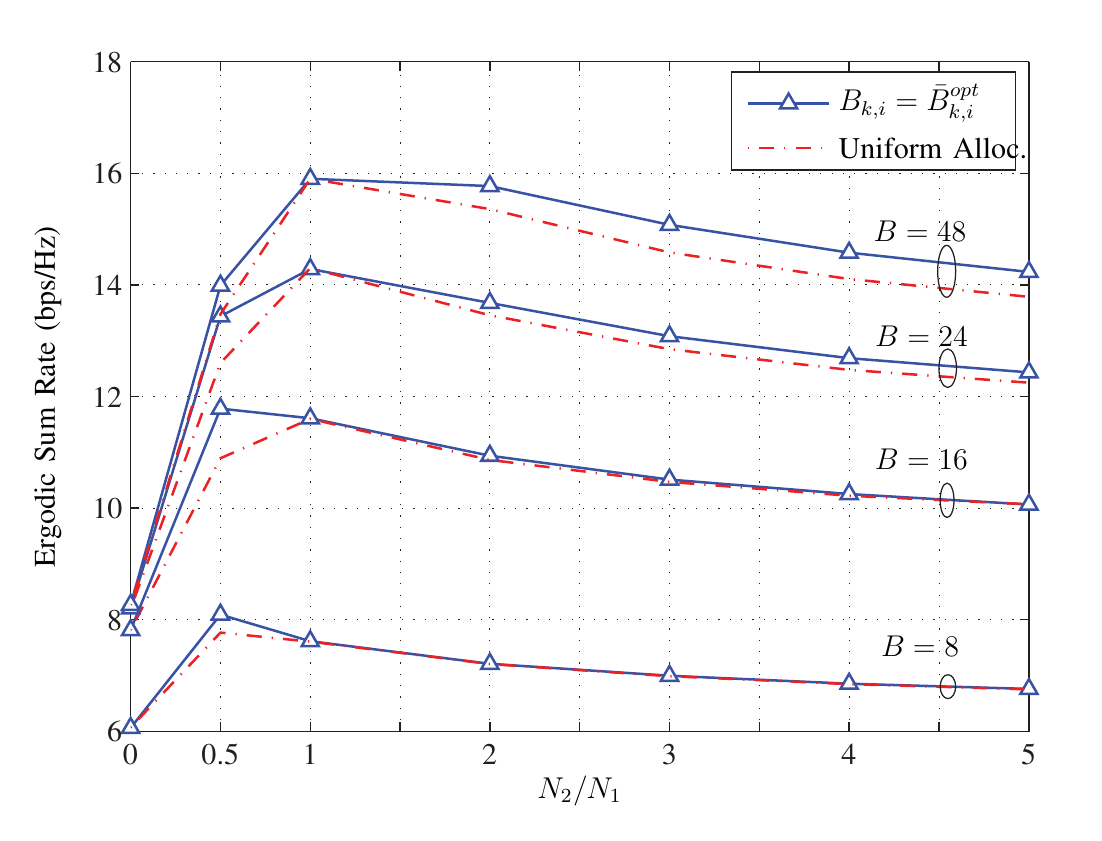}
\caption{Ergodic sum-rate results versus antenna ratio for different bit allocations when $\{ \qT_{k,i} =  \varrho_i \qI\}_{\forall k,i}$ and $\rho=$ 10dB.}\label{fig:optBit_TotalBitVsAntRatio_Monte}
\end{figure}

It is well understood that the BS with more number of antennas provides better performance due to higher inter-user interference mitigation capability. As a result, another important question is that in the case the channel-path gains between the two BSs are identical, whether the BS with more number of antennas requires more feedback bits. In Fig.~\ref{fig:optBit_TotalBitVsAntRatio_Monte}, we compare the ergodic sum-rates between the optimal and uniform bit allocations against the BS antenna ratio $\frac{N_2}{N_1}$ when $M=2$, $N_1=4$, $\rho = 10$ dB, $\{\qT_{k,i} = \varrho_i \qI\}_{\forall k}$, and $\frac{\varrho_2}{\varrho_1}=1$. Results demonstrate that if the total number of feedback bits is at a practical level, uniform bit allocation provides a comparable result to the optimal bit allocation even when $\frac{N_2}{N_1}$ is large. The difference will become more apparent if the number of feedback bits becomes very large in which the more the number of antennas the more the number of feedback bits. Also, it is observed that given a fixed number of feedback bits, the ergodic sum-rates decrease with the BS antenna ratio. For example, when $B = 48$, the sum-rate for the case with $\frac{N_2}{N_1}=2$ is greater than that with $\frac{N_2}{N_1}=1$. Based on these discussions, we conclude that if the channel-path gains are comparable and the total number of feedback bits is at a practical level, performing the optimal bit allocation does not offer much gain in rate.

Optimal bit allocation by exhaustive search is complex if the number of BSs is greater than 2. As discussed, we know that the feedback bit allocation highly depends on the channel-path gain. Specifically, if $\varrho_1 \geq \varrho_2$, we have $B_{k,1} \geq B_{k,2}$ for $\forall k$. With this characteristic, the possible bit combinations can be greatly reduced by introducing an additional restriction. For example, if the channel-path gains are ranked in decreasing order, such as $\varrho_{j_1} \geq \varrho_{j_2} \geq \cdots \geq \varrho_{j_M}$, then the search space can be expressed as
\begin{align}
{\mathbb B}^{\circ} \triangleq & \bigg\{ B_{k,i}, \forall k,i \left| \sum^M_{i=1} B_{k,i} = B, ~B_{k,j_1} \geq B_{k,j_2} \geq \cdots \right.   \nonumber \\
  & \geq  B_{k,j_M}, ~B_{k,i} \mbox{'s are non-negative integers}, \forall k   \bigg\}.  \label{eq:withConstrain}
\end{align}
Clearly, ${\mathbb B}^{\circ} \subset {\mathbb B}$. In Fig.~\ref{fig:SearchSpace}, we show the numbers of candidates in ${\mathbb B}$ and ${\mathbb B}^{\circ}$ as a function of $B$. As can be seen, the search space can be greatly reduced. For example, if $B=9$ and $M=5$ (or $M=3$), the numbers of candidates are reduced from $715$ (or $55$) to $23$ (or $12$), respectively.

\begin{figure}
\centering
\includegraphics[width=0.505\textwidth]{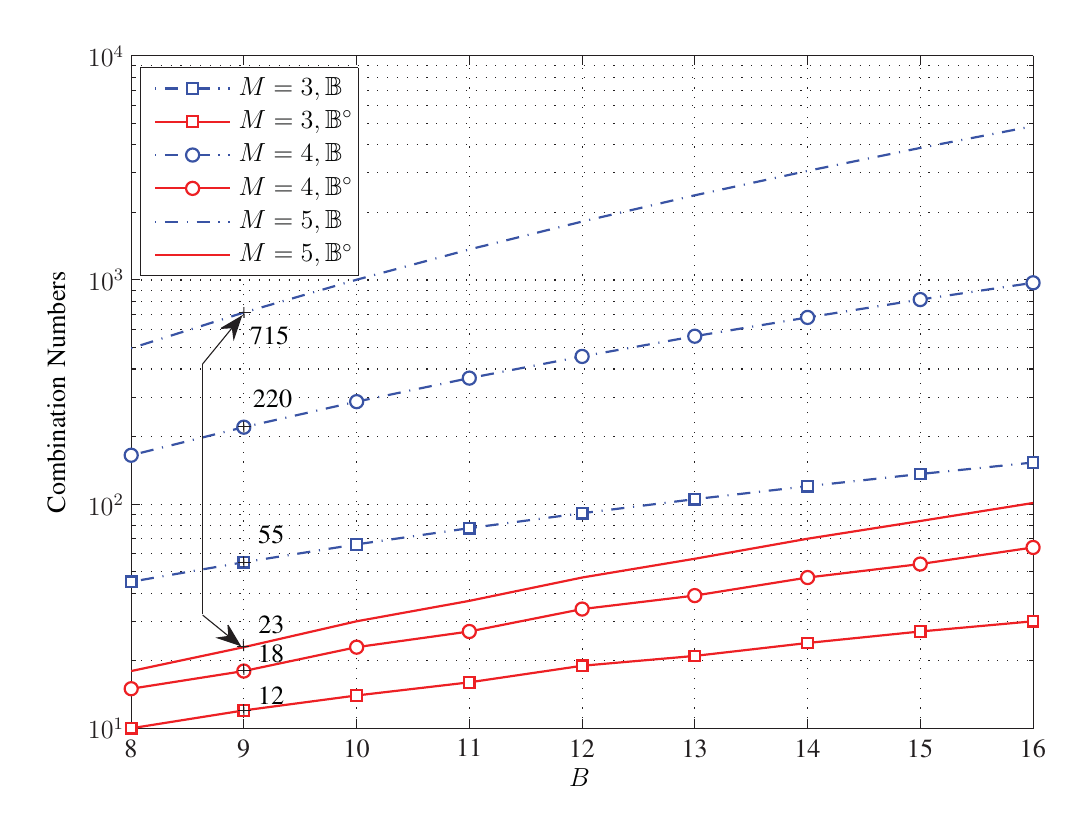}
\caption{The numbers of combinations for the exhaustive search and those with the additional restriction of (\ref{eq:withConstrain}).}\label{fig:SearchSpace}
\end{figure}

\begin{figure}
\centering
\includegraphics[width=0.495\textwidth]{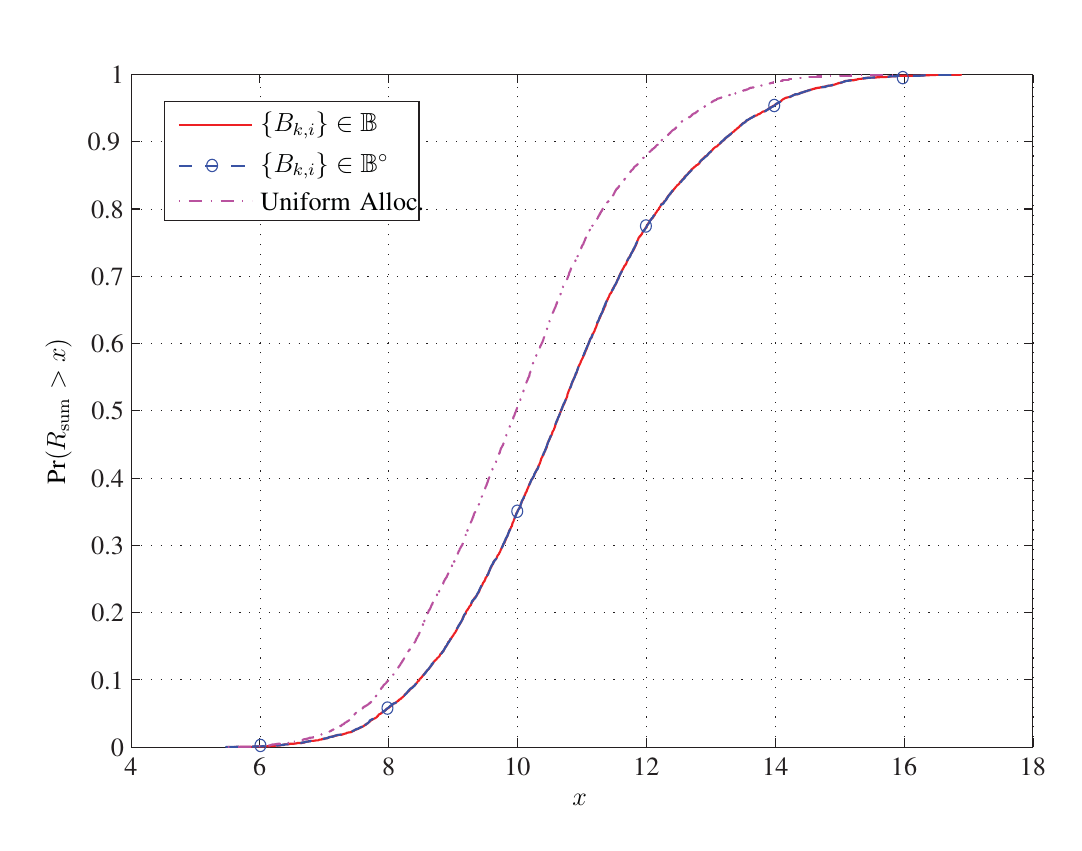}
\caption{The cdf of the ergodic sum-rates for different bit allocations when $M=3$, $N_1=N_2=N_3=3$, $\rho = 20$ dB, $\frac{\varrho_1}{\varrho_2}=-5$, $\frac{\varrho_1}{\varrho_3}= 10$, and $\{\qT_{k,i} \neq \varrho_i \qI\}_{\forall k,i}$. }\label{fig:optBit_CDF}
\end{figure}

Finally, we conclude this section by showing the cumulative distribution function (cdf) results of the ergodic sum-rates when the optimal bit allocations are searched through ${\mathbb B}$ and ${\mathbb B}^{\circ}$ and the results are obtained from $10^4$ independent and arbitrary spatial correlation patterns in Fig.~\ref{fig:optBit_CDF}. Recalling from the discussion in Section III, we regard $\tr \left(\qT_{k,i} \qPsi_i \right)$ as the equivalent channel gain contributed by ${\sf BS}_i$ to ${\sf UE}_k$. Specifically, we rank $ \tr \left(\qT_{k,{j_1}} \qPsi_{j_1} \right) \geq \ldots \geq \tr \left(\qT_{k,{j_M}} \qPsi_{j_M}\right) $ in decreasing order and adopt ${\mathbb B}^{\circ}$ as the feasible set for each user. The search based on ${\mathbb B}^{\circ}$ provides almost identical results to the exhaustive search.

\section{Conclusion}
Using large dimensional RMT, we studied the multi-cell downlink system with cooperative BSs and multiple single-antenna UEs. The deterministic equivalent of the ergodic sum-rate for the RZF system with imperfect CSIT was derived. Simulation results have revealed that the deterministic equivalent sum-rate provides reliable performance predictions even for small system dimensions. Motivated by this, we used the deterministic equivalent result to determine the asymptotic-optimal regularization parameter and the asymptotic-optimal feedback bit allocation. For both applications, we demonstrated that the proposed results achieve indistinguishable results to those obtained by the exhaustive Monte-Carlo method. Several insights and intuitions have been gained for the feedback bit allocation problem and in particularly, we showed that the channel-path gain plays a more important role than the BS antenna ratio.

\appendices

\section{Proof of Theorem \ref{Th: 2}}\label{Appendix: Proof of Theorem 2}

Before proceeding, we first introduce the following theorem which serves as the mathematical basis for the derivation of the multi-cell multiple-input single-output (MISO) channel.

\begin{Theorem}\label{Th: 1}
Consider an $N\times N$ matrix of the following form:
\begin{equation*}
    \qB = \sum_{k=1}^{K} \qT_k^{\frac{1}{2}} \qx_k \qx_k^H \qT_k^{\frac{1}{2}}.
\end{equation*}
In addition to Assumption \ref{Ass:1} in Section III, we suppose that $\qQ \in \bbC^{N\times N}$ is a nonnegative definite matrix with uniformly bounded spectral
norm and $M$ is a finite non-negative integer. Define the matrix product Stieltjes transform of $\qB$ as
\begin{equation*}
    m_{\qB, \qQ}(\alpha)=\frac{1}{N}\tr \left(\qQ(\qB+\alpha\qI_N)^{-1}\right).
\end{equation*}
Then, as $\calN \rightarrow \infty$, we have
\begin{equation*}
    m_{\qB,\qQ}(\alpha) - \overline{m}_{\qB,\qQ} (\alpha) \xrightarrow{a.s.} 0,~~\mbox{for}~ \alpha \in \bbR^+,
\end{equation*}
with $\overline{m}_{\qB,\qQ} (\alpha)$ given by
\begin{equation*}
    \overline{m}_{\qB,\qQ} (\alpha) = \frac{1}{N} \tr \left(\qQ \diag \left( \qPsi_1(\alpha), \qPsi_2(\alpha), \dots , \qPsi_M (\alpha)\right)\right),
\end{equation*}
where
\begin{equation*}
    \qPsi_i (\alpha)=\left(  \frac{1}{N_i} \sum_{k=1}^{K}\frac{1}{\sum_{m=1}^{M} e_{k,m}(\alpha)+1} \qT_{k,i} + \alpha\qI_{N_i} \right)^{-1},
\end{equation*}
and ${e_{k,i}}(\alpha)$'s form the unique solution of the following $K \times M$ equations
\begin{equation*}
    e_{k,i}(\alpha) = \frac{1}{N_i}\tr \left(\qT_{k,i} \qPsi_i (\alpha) \right),
\end{equation*}
for $ k =1,\dots K$, and $ i = 1,\dots,M$.
\end{Theorem}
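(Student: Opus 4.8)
The plan is to establish the deterministic equivalent by the resolvent method combined with a leave-one-out (rank-one perturbation) analysis, adapting the standard program for sample-covariance-type matrices to the block-diagonal correlation $\qT_k$ with its block-dependent variance profile. Write $\qR \triangleq (\qB + \alpha\qI_N)^{-1}$ and $\overline{\qR} \triangleq \diag(\qPsi_1,\ldots,\qPsi_M)$, so that $\frac{1}{N}\tr(\qQ\overline{\qR}) = \frac{1}{N}\sum_{i=1}^M \tr([\qQ]_{ii}\qPsi_i)$ with $[\qQ]_{ii}$ the $i$-th diagonal block. The target is then $\frac{1}{N}\tr(\qQ\qR) - \frac{1}{N}\tr(\qQ\overline{\qR}) \xrightarrow{a.s.} 0$, which in particular forces the off-diagonal blocks of $\qR$ to be asymptotically negligible in the normalized trace; this is exactly where the mutual independence of the sub-vectors $\qx_{k,1},\ldots,\qx_{k,M}$ will be used.

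First I would isolate the $k$-th rank-one term, $\qB = \qB_{[k]} + \qT_k^{\frac{1}{2}}\qx_k\qx_k^H\qT_k^{\frac{1}{2}}$, and apply the Sherman--Morrison identity to express $\qR$ in terms of $\qR_{[k]} \triangleq (\qB_{[k]}+\alpha\qI_N)^{-1}$. The denominator appearing there is the quadratic form $\qx_k^H\qT_k^{\frac{1}{2}}\qR_{[k]}\qT_k^{\frac{1}{2}}\qx_k$. Because $\qT_k$ is block-diagonal and the blocks $\qx_{k,i}$ are independent with entrywise variance $1/N_i$, the trace lemma (quadratic-form concentration) decomposes this form blockwise into $\sum_{i=1}^M \frac{1}{N_i}\tr(\qT_{k,i}[\qR_{[k]}]_{ii}) + o(1)$ almost surely, the finite eighth-moment hypothesis of Assumption \ref{Ass:1} supplying the almost-sure rate via Borel--Cantelli. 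Recognizing $\frac{1}{N_i}\tr(\qT_{k,i}[\qR_{[k]}]_{ii}) \approx e_{k,i}$, the denominator converges to $1 + \sum_{m=1}^M e_{k,m}$, which is precisely the origin of the inter-block coupling in the fixed-point system.

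Next I would derive the self-consistent equation for the diagonal blocks. Starting from $\qB\qR = \qI_N - \alpha\qR$, summing the rank-one contributions and substituting the blockwise trace-lemma estimates yields, on the $i$-th diagonal block, an approximate identity $\bigl(\frac{1}{N_i}\sum_{k=1}^K \frac{\qT_{k,i}}{1+\sum_m e_{k,m}} + \alpha\qI_{N_i}\bigr)[\qR]_{ii} \approx \qI_{N_i}$, so that $[\qR]_{ii} \approx \qPsi_i$; the off-diagonal blocks $[\qR]_{ij}$ with $i\neq j$ carry cross-products $\qx_{k,i}^H(\cdots)\qx_{k,j}$ of independent zero-mean sub-vectors and vanish in the trace. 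To make this rigorous I would bound $\|\bbE\{\qR\} - \overline{\qR}\|$ by a contraction/fixed-point argument using the resolvent estimate $\|\qR\|\le 1/\alpha$ and Lipschitz control in the scalars $e_{k,i}$, which simultaneously delivers the existence and uniqueness of the solution of \eqref{eq: Psi i} and \eqref{eq: e km}; uniqueness I would argue by showing the governing map is a standard (monotone, scalable) interference function, or equivalently by a Stieltjes-transform analyticity argument for $\alpha\in\bbR^+$.

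Finally, to upgrade convergence in expectation to almost-sure convergence, I would use a martingale-difference decomposition of $\frac{1}{N}\tr(\qQ\qR) - \bbE\{\frac{1}{N}\tr(\qQ\qR)\}$ along the filtration generated by $\qx_1,\ldots,\qx_K$, bound the increments with the rank-one resolvent perturbation inequality, and invoke the Burkholder inequality together with Borel--Cantelli. The hardest part will be handling the coupled block structure cleanly: unlike the single-correlation ($M=1$) case of \cite{Wagner12IT}, here the $M$ matrix-valued fixed points $\qPsi_i$ interact only through the scalar couplings $1+\sum_m e_{k,m}$, and I must verify both that the off-diagonal resolvent blocks are genuinely negligible in the trace and that the coupled fixed-point system remains well-posed with a unique positive solution. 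These two points are where the block-dependent variance profile and the mutual independence across blocks must be exploited with care.
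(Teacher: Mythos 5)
Your program is correct in outline, but it takes a much longer road than the paper does: the paper's entire proof is a one-line reduction to \cite[Theorem 1]{Wagner12IT}. The key observation is that the block-diagonal $\qT_k = \diag(\qT_{k,1},\dots,\qT_{k,M})$ together with the block-dependent entry variances $1/N_i$ of $\qx_{k,i}$ is just a special case of the general per-user correlation model of \cite{Wagner12IT}: rescale each $\qx_{k,i}$ to have entries of variance $1/N$ and absorb the factor $N/N_i$ into the $i$-th block of the correlation matrix. Under that substitution the single scalar fixed point $e_k = \frac{1}{N}\tr(\qTheta_k\qPsi)$ of \cite{Wagner12IT} reads off blockwise as $e_k = \sum_{i} e_{k,i}$ with $e_{k,i}=\frac{1}{N_i}\tr(\qT_{k,i}\qPsi_i)$, and $\qPsi$ is automatically block-diagonal, i.e.\ $\qPsi=\diag(\qPsi_1,\dots,\qPsi_M)$. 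This means the two points you single out as the hardest---negligibility of the off-diagonal resolvent blocks in the normalized trace, and well-posedness of the coupled $K\times M$ fixed-point system---are obtained for free: the deterministic equivalent of the resolvent is block-diagonal by construction, and the $K\times M$ system collapses to the $K$-scalar system whose existence and uniqueness are already established in \cite{Wagner12IT}. Your from-scratch derivation (Sherman--Morrison, blockwise trace lemma under the eighth-moment hypothesis, self-consistency from $\qB\qR=\qI_N-\alpha\qR$, martingale differences with Burkholder and Borel--Cantelli, interference-function uniqueness) is essentially a re-proof of that cited theorem specialized to the block structure; it would work, and it has the pedagogical merit of exposing where the coupling $1+\sum_m e_{k,m}$ comes from, but it is not needed. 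One technical caution if you do pursue it: aiming to bound $\|\Ex\{\qR\}-\overline{\qR}\|$ in spectral norm is stronger than what the method delivers (and than what the statement requires); the standard argument only controls normalized traces $\frac{1}{N}\tr\bigl(\qQ(\Ex\{\qR\}-\overline{\qR})\bigr)$ uniformly over bounded $\qQ$, and similarly the off-diagonal blocks $[\qR]_{ij}$ are not small in norm---only their weighted traces vanish.
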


\begin{proof}
The result can be obtained immediately from \cite[Theorem 1]{Wagner12IT}.\footnote{In this paper, $\qQ \in \bbC^{N\times N}$ is a nonnegative definite matrix, but in \cite[(98)]{Wagner12IT}, $\qQ$ was considered positive definite. However, it can be easily shown that the proof in \cite{Wagner12IT} still works without the positive definite condition on $\qQ$. }
\end{proof}

Note that $m_{\qB,\qQ}(\alpha)$, $\overline{m}_{\qB,\qQ} (\alpha)$, $\qPsi_i (\alpha)$, and $e_{k,i}(\alpha)$ are all functions of $\alpha$ but for ease of notations, $\alpha$ is dropped. In addition, all the following approximations will be done in the limit $\calN \rightarrow \infty$ and for ease of expression, the notation ``$\rightarrow$'' will represent the limit for $\calN \rightarrow \infty$.

Theorem \ref{Th: 1} indicates that $m_{\qB,\qQ}$ can be approximated by its deterministic equivalent $\overline{m}_{\qB,\qQ}$ without knowing the actual realization of $\qx_k$'s. The deterministic equivalent is analytical and is much easier to compute than $\Ex_{\qB}\{ m_{\qB,\qQ} \}$, which requires time-consuming Monte-Carlo simulations. Motivated by this result in the large system limit, we aim to derive the deterministic equivalent of $\gamma_{k}$.

The SINR $\gamma_{k}$ in \eqref{eq:SINR} consists of 1) the noise power $\nu$, 2) the signal power $| \qh_k^H \hat{\qW}\hat{\qh}_k|^2$, and 3) the interference power $\qh_k^H \hat{\qW} \hat{\qH}_{[k]}^H \hat{\qH}_{[k]} \hat{\qW} \qh_k$. We will derive the deterministic equivalent of each term in the following three lemmas. Though the procedure for the MISO channel without multi-cell cooperation \cite{Wagner12IT} is used, several nontrivial manipulations (especially Lemma \ref{Lemma 1}) for the multi-cell system with spatial correlations are required.

\begin{Lemma}\label{Lemma first term}
Under Assumption \ref{Ass:1}, as $\calN \rightarrow \infty$, we have
\begin{equation}\label{eq: asymptotic expression nu}
    \nu-\overline{\nu} \xrightarrow{a.s.} 0,
\end{equation}
where $\overline{\nu}$ is given by \eqref{eq: nu_circ}.
\end{Lemma}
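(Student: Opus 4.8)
The plan is to reduce $\nu=\max_i\frac{N\Phi_i(\alpha)}{N_i\rho}$ to the per-BS quantities $\Phi_i(\alpha)=\frac1N\tr(\qE_i\hat{\qW}\hat{\qH}^H\hat{\qH}\hat{\qW}\qE_i)$, establish a deterministic equivalent for each one separately, and then pass the maximum through using that $M$ is finite. Writing $\qB=\hat{\qH}^H\hat{\qH}=\sum_{k=1}^K\hat{\qh}_k\hat{\qh}_k^H$, which is exactly the matrix of Theorem \ref{Th: 1} (each $\hat{\qh}_k=\qT_k^{1/2}\hat{\qx}_k$ with $\hat{\qx}_k$ having i.i.d. zero-mean, variance-$1/N_i$ entries in block $i$ by Assumption \ref{Ass:1}), and using that $\hat{\qW}=(\qB+\alpha\qI_N)^{-1}$ is Hermitian and $\qE_i$ idempotent, I first apply the resolvent identity $\qB\hat{\qW}=\qI_N-\alpha\hat{\qW}$ to get $\hat{\qW}\hat{\qH}^H\hat{\qH}\hat{\qW}=\hat{\qW}-\alpha\hat{\qW}^2$. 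Hence $\frac{N\Phi_i(\alpha)}{N_i\rho}=\frac1{N_i\rho}\tr(\qE_i\hat{\qW})-\frac{\alpha}{N_i\rho}\tr(\qE_i\hat{\qW}^2)$, and it suffices to find deterministic equivalents of the two normalized traces $\frac1N\tr(\qE_i\hat{\qW})$ and $\frac1N\tr(\qE_i\hat{\qW}^2)$ (the factors $\frac{N}{N_i\rho}$ are bounded, so they preserve a.s. convergence).

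The first trace is the matrix-product Stieltjes transform $m_{\qB,\qE_i}(\alpha)$ with $\qQ=\qE_i$, a nonnegative-definite matrix of unit spectral norm. Theorem \ref{Th: 1} then gives $\frac1N\tr(\qE_i\hat{\qW})-\frac1N\tr\qPsi_i\xrightarrow{a.s.}0$, since $\tr(\qE_i\,\diag(\qPsi_1,\dots,\qPsi_M))=\tr\qPsi_i$. For the second trace I use $\hat{\qW}^2=-\frac{\partial}{\partial\alpha}\hat{\qW}$, so that $\frac1N\tr(\qE_i\hat{\qW}^2)=-\frac{\partial}{\partial\alpha}m_{\qB,\qE_i}(\alpha)$. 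Both $m_{\qB,\qE_i}$ and its deterministic equivalent $\frac1N\tr\qPsi_i$ are analytic in $\alpha$ on $\bbC\setminus\bbR^-$ and uniformly bounded on compact subsets; a standard normal-families (Vitali) argument therefore transfers the a.s. convergence to the derivatives, yielding $\frac1N\tr(\qE_i\hat{\qW}^2)\xrightarrow{a.s.}-\frac{\partial}{\partial\alpha}\bigl(\frac1N\tr\qPsi_i\bigr)$.

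It remains to evaluate $\frac{\partial}{\partial\alpha}\tr\qPsi_i$. Writing $\qPsi_i=(\frac1{N_i}\sum_k c_k\qT_{k,i}+\alpha\qI_{N_i})^{-1}$ with $c_k=(\sum_m e_{k,m}+1)^{-1}$ and differentiating gives $\frac{\partial}{\partial\alpha}\tr\qPsi_i=-\tr\qPsi_i^2-\frac1{N_i}\sum_k\dot c_k\tr(\qPsi_i\qT_{k,i}\qPsi_i)$, where $\dot c_k=\partial c_k/\partial\alpha$ is itself determined by the derivatives of the fixed-point quantities $e_{k,m}$. Differentiating the coupled fixed-point relations \eqref{eq: Psi i}--\eqref{eq: e km} and collecting terms yields a closed linear system for the scaled derivatives $\dot c_{k,i}\triangleq-\frac1{N_i}\dot c_k$, which is precisely $\qTheta\,{\sf vec}(\dot{\qC})={\sf vec}(\qGamma)$ with $\qTheta$ and $\qGamma$ as in \eqref{eq: qTheta_ik}--\eqref{eq: eta_k}; invertibility of $\qTheta$ makes this $\dot c_{k,i}$ well defined. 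Substituting back, each term converges a.s. to $\frac1{N_i\rho}\bigl(\tr\qPsi_i-\alpha\tr\qPsi_i^2+\alpha\sum_k\dot c_{k,i}\tr(\qPsi_i\qT_{k,i}\qPsi_i)\bigr)$, which is exactly the $i$-th argument of the maximum in \eqref{eq: nu_circ}. Since $M$ is finite and the maximum of finitely many a.s.-convergent sequences converges a.s. to the maximum of the limits, I conclude $\nu-\overline{\nu}\xrightarrow{a.s.}0$.

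The main obstacle is the differentiation step: justifying that the a.s. convergence of the Stieltjes transform passes to its $\alpha$-derivative (the Vitali argument on a common probability-one set), and carrying out the implicit differentiation of the coupled system \eqref{eq: Psi i}--\eqref{eq: e km} so that the derivatives are seen to solve the linear system defining $\dot c_{k,i}$. The reduction via the resolvent identity, the application of Theorem \ref{Th: 1}, and the passage of the finite maximum are all routine.
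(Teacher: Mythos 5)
Your proposal is correct and follows essentially the same route as the paper's own proof: reduce $\Phi_i$ via the resolvent identity to $m_{\hat{\qH}^H\hat{\qH},\qE_i}$ and its $\alpha$-derivative, invoke Theorem \ref{Th: 1} for the first, transfer almost-sure convergence to the derivative for the second, and obtain $\dot{c}_{k,i}$ by implicit differentiation of the fixed-point system, yielding \eqref{eq: nu_circ}. You are in fact more explicit than the paper about the resolvent step, the Vitali/normal-families justification for differentiating the deterministic equivalent, and the passage of the finite maximum.
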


\begin{proof}
According the definition of $\nu$, we have
\begin{equation*}
   \nu  = \mathop {\max }\limits_i  \frac{N}{N_{i}\rho}\left( m_{\hat{\qH}^H \hat{\qH}, \qE_i} - \alpha \dot{m}_{\hat{\qH}^H \hat{\qH}, \qE_i} \right),
\end{equation*}
where $\dot{m}_{\hat{\qH}^H \hat{\qH}, \qE_i}$ is the derivative of $m_{\hat{\qH}^H \hat{\qH}, \qE_i}$ w.r.t.~$\alpha$. Using Theorem \ref{Th: 1}, we get
\begin{equation}\label{eq: asymptotic expression m}
    m_{\hat{\qH}^H \hat{\qH}, \qE_i}-\overline{m}_{\hat{\qH}^H \hat{\qH}, \qE_i} \xrightarrow{a.s.} 0,
\end{equation}
where
\begin{equation}\label{eq: mEi}
    \overline{m}_{{{{\bf{\hat H}}}^H}{\bf{\hat H}},{{\bf{E}}_i}}  = \frac{1}{N}\tr{\bf \Psi}_i .
\end{equation}
Since the derivative of $\overline{m}_{\hat{\qH}^H \hat{\qH}, \qE_i}$ is a deterministic equivalent for $\dot{m}_{\hat{\qH}^H \hat{\qH}, \qE_i}$, we have
\begin{equation}\label{eq: asymptotic expression dotm}
    \dot{m}_{\hat{\qH}^H \hat{\qH}, \qE_i}-\overline{\dot m}_{\hat{\qH}^H \hat{\qH}, \qE_i} \xrightarrow{a.s.} 0.
\end{equation}
From \eqref{eq: asymptotic expression m} and \eqref{eq: asymptotic expression dotm}, we get \eqref{eq: asymptotic expression nu},
where
\begin{equation}\label{eq: nu_circ1}
    \overline{\nu} = \mathop {\max }\limits_i \frac{N}{N_{i}\rho}\left(\overline{m}_{\hat{\qH}^H \hat{\qH}, \qE_i} - \alpha \overline{\dot m}_{\hat{\qH}^H \hat{\qH}, \qE_i}\right).
\end{equation}
The first term of $\overline{\nu}$ can be calculated by \eqref{eq: mEi}. Next, we proceed to derive the second term. Taking the derivative of
$\overline{m}_{\hat{\qH}^H \hat{\qH}, \qE_i}$ w.r.t.~$\alpha$, we have
\begin{equation}\label{eq: dotm}
\overline{\dot m}_{\hat{\qH}^H \hat{\qH}, \qE_i} = \frac{1}{N}\tr {\bf \Psi}_i^2 - \frac{1}{N} \sum _{k=1}^{K} \dot{c}_{k,i}  \tr \left(\qPsi_i \qT_{k,i} \qPsi_i\right),
\end{equation}
where
\begin{align}
\dot{c}_{k,i}  &= \frac{1}{N_i} \frac{-\sum_{j = 1}^M \dot{e}_{k,j}}{\left(\sum_{j = 1}^M e_{k,j}+1\right)^2}, \label{eq: dotc_k} \\
\dot{e}_{k,j} &= \frac{1}{N_j}\tr \left(\qT_{k,j} \qPsi^2_j \right) - \frac{1}{N_j} \sum _{l=1}^{K} \dot{c}_{l,j}  \tr \left(\qT_{k,j} \qPsi_j \qT_{l,j} \qPsi_j \right), \label{eq: dote_ki}
\end{align}
and $\qPsi_i$ is given by \eqref{eq: Psi i}. Plugging \eqref{eq: dote_ki} into \eqref{eq: dotc_k}, we have that $\dot{\qC}   = [\dot{c}_{k,i}  ] \in \bbC^{K \times
M}$ is a solution of the linear equation \eqref{eq: Theta}. Substituting \eqref{eq: mEi} and \eqref{eq: dotm} into \eqref{eq: nu_circ1}, $\overline{\nu}$ is
explicitly given by \eqref{eq: nu_circ}.
\end{proof}

\begin{Lemma}\label{Lemma second term}
Under Assumption \ref{Ass:1}, as $\calN \rightarrow \infty$, we have
\begin{equation}\label{eq: asymptotic expression signal power}
\qh_k^H \hat{\qW} \hat{\qh}_k - \frac{\overline{u}^{(2)}_k}{1 + \overline{u}^{(1)}_k } \xrightarrow{a.s.} 0,
\end{equation}
where $\overline{u}^{(1)}_k$ and $\overline{u}^{(2)}_k$ are given by \eqref{eq: u1_circ} and \eqref{eq: u2_circ}, respectively.
\end{Lemma}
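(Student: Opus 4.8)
The plan is to isolate the contribution of the $k$-th column of $\hat{\qH}$ through the matrix inversion lemma, thereby reducing the coupled quantity $\qh_k^H \hat{\qW} \hat{\qh}_k$ to a ratio of two forms in which the random vectors are independent of the matrix they sandwich, and then to pin down the deterministic limits of the numerator and denominator separately using the trace lemma together with Theorem~\ref{Th: 1}. First I would introduce the leave-one-out resolvent $\hat{\qW}_{[k]} \triangleq (\hat{\qH}_{[k]}^H \hat{\qH}_{[k]} + \alpha \qI_N)^{-1}$, which satisfies $\hat{\qW}^{-1} = \hat{\qW}_{[k]}^{-1} + \hat{\qh}_k \hat{\qh}_k^H$ and, crucially, depends only on $\{\qx_l,\qv_l\}_{l\neq k}$, hence is independent of the $k$-th column data $\qx_k,\qv_k$. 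Applying the Sherman--Morrison identity and simplifying gives the exact relation
\begin{equation*}
\qh_k^H \hat{\qW} \hat{\qh}_k = \frac{\qh_k^H \hat{\qW}_{[k]} \hat{\qh}_k}{1 + \hat{\qh}_k^H \hat{\qW}_{[k]} \hat{\qh}_k},
\end{equation*}
which decouples the problem into a quadratic form in the denominator and a bilinear form in the numerator.

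Next I would evaluate the denominator. Writing $\hat{\qh}_k = \qT_k^{\frac{1}{2}} \hat{\qx}_k$ and exploiting that $\hat{\qx}_k$ has independent blocks with per-block entry variance $\frac{1}{N_i}$ (recall $\psi_{k,i}^2 + \tau_{k,i}^2 = 1$), the trace lemma applied block by block, with the off-block terms vanishing almost surely because distinct blocks are independent and zero-mean, yields $\hat{\qh}_k^H \hat{\qW}_{[k]} \hat{\qh}_k - \sum_{i=1}^M \frac{1}{N_i}\tr(\qT_{k,i}[\hat{\qW}_{[k]}]_{ii}) \xrightarrow{a.s.} 0$, where $[\hat{\qW}_{[k]}]_{ii}$ is the $i$-th diagonal block. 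Since $\hat{\qW}_{[k]}$ is a rank-one perturbation of $\hat{\qW}$, the rank-one perturbation lemma lets me replace it by $\hat{\qW}$ inside the trace, and Theorem~\ref{Th: 1}, applied with $\qQ$ chosen as $\qT_{k,i}$ embedded in the $i$-th diagonal block, identifies $\frac{1}{N_i}\tr(\qT_{k,i}[\hat{\qW}]_{ii}) \rightarrow \frac{1}{N_i}\tr(\qT_{k,i}\qPsi_i)$. Summing over $i$ recovers exactly $\overline{u}^{(1)}_k$ in \eqref{eq: u1_circ}.

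The numerator is the main obstacle, since it is a \emph{bilinear} form in the two correlated vectors $\qx_k$ (inside $\qh_k$) and $\hat{\qx}_k$ (inside $\hat{\qh}_k$), related by $\hat{\qx}_{k,i} = \psi_{k,i}\qx_{k,i} + \tau_{k,i}\qv_{k,i}$ with $\qv_{k,i}$ independent of $\qx_{k,i}$. Conditioning on $\hat{\qW}_{[k]}$ and applying the generalized (bilinear) trace lemma gives $\qh_k^H \hat{\qW}_{[k]}\hat{\qh}_k - \Ex\{\qx_k^H \qT_k^{\frac{1}{2}}\hat{\qW}_{[k]}\qT_k^{\frac{1}{2}}\hat{\qx}_k \mid \hat{\qW}_{[k]}\} \xrightarrow{a.s.} 0$. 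The conditional expectation annihilates the $\qv$-contribution and leaves only the diagonal correlation $\Ex\{\overline{x}_{k,j}\hat{x}_{k,j}\} = \psi_{k,i}/N_i$ for an index $j$ in block $i$, which produces the extra weighting factor $\psi_{k,i}$ per block. The same resolvent-block identification as in the denominator then gives $\qh_k^H \hat{\qW}_{[k]}\hat{\qh}_k \rightarrow \sum_{i=1}^M \frac{\psi_{k,i}}{N_i}\tr(\qT_{k,i}\qPsi_i) = \overline{u}^{(2)}_k$, matching \eqref{eq: u2_circ}.

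Finally, combining the two limits through the continuous mapping theorem, with the denominator limit $1+\overline{u}^{(1)}_k$ bounded away from zero, yields $\qh_k^H \hat{\qW}\hat{\qh}_k \rightarrow \frac{\overline{u}^{(2)}_k}{1+\overline{u}^{(1)}_k}$, which is \eqref{eq: asymptotic expression signal power}. The points I expect to require the most care are establishing the independence that the leave-one-out step buys after removing the $k$-th column, verifying that the off-diagonal-block quadratic and bilinear terms vanish almost surely under Assumption~\ref{Ass:1}, and correctly tracking the per-block $\psi_{k,i}$ weighting that is the sole structural difference between $\overline{u}^{(2)}_k$ and $\overline{u}^{(1)}_k$.
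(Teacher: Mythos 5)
Your proposal is correct and follows essentially the same route as the paper's proof: the matrix inversion lemma applied to the leave-one-out resolvent $\qA_{[k]}^{-1}$ to obtain the ratio form, the trace lemma (with the $\qx_k$--$\qv_k$ cross contribution vanishing by independence, which is exactly what your conditional-expectation step encodes) to identify $u_k^{(1)}$ and the $\psi_{k,i}$-weighted $u_k^{(2)}$, and then the rank-one perturbation lemma plus Theorem~\ref{Th: 1} to pass to the deterministic equivalents $\overline{u}^{(1)}_k$ and $\overline{u}^{(2)}_k$. The only cosmetic difference is that the paper writes the numerator explicitly as a quadratic form in $\qx_k$ plus a vanishing bilinear term in $(\qx_k,\qv_k)$ rather than invoking a bilinear trace lemma, but the mechanism is identical.
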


\begin{proof}
For the signal power $| \qh_k^H \hat{\qW}\hat{\qh}_k|^2$, we have
\begin{align}\label{eq: signal power}
\qh_k^H \hat{\qW} \hat{\qh}_k &= \frac{1}{1 + \hat{\qh}_k^H \qA_{[k]}^{-1} \hat{\qh}_k} \qh_k^H \qA_{[k]}^{-1} \hat{\qh}_k  \nonumber\\
 &= \frac{1}{1 + \hat{\qx}_k^H \qT^{\frac{1}{2}}_k \qA_{[k]}^{-1} \qT^{\frac{1}{2}}_k \hat{\qx}_k} \qx_k^H \qT^{\frac{1}{2}}_k \qA_{[k]}^{-1} \qT^{\frac{1}{2}}_k  \qLambda_k \qx_k  \nonumber\\
 &\quad + \frac{1}{1+ \hat{\qx}_k^H \qT^{\frac{1}{2}}_k \qA_{[k]}^{-1} \qT^{\frac{1}{2}}_k \hat{\qx}_k} \qx_k^H \qT^{\frac{1}{2}}_k \qA_{[k]}^{-1} \qT^{\frac{1}{2}}_k  \qOmega_k \qv_k,
\end{align}
where $\qLambda_k = \diag \left(\psi_{k,1} \qI_{N_1}, \ldots, \psi_{k,M} \qI_{N_M} \right)$, $\qOmega_k = \diag \left(\tau_{k,1} \qI_{N_1}, \ldots, \tau_{k,M} \qI_{N_M} \right)$, and $\qA_{[k]} \triangleq \hat{\qH}_{[k]}^H \hat{\qH}_{[k]} + \alpha \qI_N$, the first equality follows by the matrix inverse lemma
\cite[Lemma 2.1]{Bai-09},\footnote{\cite[Lemma 2.1]{Bai-09}: For any $\qA\in \bbC^{n\times n}$ and $\qq \in \bbC^n$ with $\qA$ and $\qA+\qq\qq^H$ invertible, we
have $$\qq^H \left( \qA+\qq\qq^H \right)^{-1} = \frac{1}{1+\qq^H\qA^{-1}\qq} \qq^H\qA^{-1}.$$} and the second equality follows merely from the definitions of $\qh_k$
and $\hat{\qh}_k$ in \eqref{eq:the correlated channel model} and \eqref{eq:an imperfect estimate H_km}. Since $\qx_k$ and $\qv_k$ are independent, $\qx_k^H
\qT^{\frac{1}{2}}_k \qA_{[k]}^{-1} \qT^{\frac{1}{2}}_k \qOmega_k \qv_k$ almost surely converges to zero. Also, applying \cite[Lemma 2.3]{Bai-09}, we have that
$\hat{\qx}_k^H \qT^{\frac{1}{2}}_k \qA_{[k]}^{-1} \qT^{\frac{1}{2}}_k \hat{\qx}_k$ almost surely converges to $u^{(1)}_k =\frac{1}{N}\tr (\qT_k \qA_{[k]}^{-1})$
and $\qx_k^H \qT^{\frac{1}{2}}_k \qA_{[k]}^{-1} \qT^{\frac{1}{2}}_k \qLambda_k \qx_k$ almost surely converges to $u^{(2)}_k =\frac{1}{N}\tr (\qT^{\frac{1}{2}}_k
\qLambda_k \qT^{\frac{1}{2}}_k \qA_{[k]}^{-1})$. Let $\qA = \hat{\qH}^H \hat{\qH} + \alpha \qI_M$. Using the fact \cite[Lemma 2.2]{Bai-09} that $
\frac{1}{N}\tr(\qQ \qA_{[k]}^{-1}) -\frac{1}{N}\tr(\qQ \qA^{-1}) \xrightarrow{a.s.} 0$ and Theorem \ref{Th: 1}, we have
\begin{equation}\label{eqn: u1u20}
 u^{(1)}_k - \overline{u}^{(1)}_k  \xrightarrow{a.s.} 0,~~\mbox{and}~~u^{(2)}_k - \overline{u}^{(2)}_k  \xrightarrow{a.s.} 0,
\end{equation}
where $\overline{u}^{(1)}_k = \overline{m}_{\hat{\qH}^H \hat{\qH}, \qT_k} $ and $\overline{u}^{(2)}_k = \overline{m}_{\hat{\qH}^H
\hat{\qH},\qT^{\frac{1}{2}}_k\qLambda_k \qT^{\frac{1}{2}}_k} $. Here, $\overline{u}^{(1)}_k$ and $\overline{u}^{(2)}_k$ can be obtained by Theorem \ref{Th: 1} and
are explicitly shown in \eqref{eq: u1_circ} and \eqref{eq: u2_circ} respectively. From \eqref{eq: signal power} and (\ref{eqn: u1u20}), we have \eqref{eq: asymptotic expression signal power}.
\end{proof}

\begin{Lemma}\label{Lemma third term}
Under Assumption \ref{Ass:1}, as $\calN \rightarrow \infty$, we have
\begin{equation}\label{eq: asymptotic expression interference power}
    \qh_k^H \hat{\qW} \hat{\qH}_{[k]}^H \hat{\qH}_{[k]} \hat{\qW} \qh_k - \overline{u}_k \xrightarrow{a.s.} 0,
\end{equation}
where $\overline{u}_k$ has been obtained by \eqref{eq: u_circ}.
\end{Lemma}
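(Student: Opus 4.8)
The plan is to reuse the decomposition and concentration machinery already assembled for Lemmas \ref{Lemma first term} and \ref{Lemma second term}. Writing $\qA \triangleq \hat{\qH}^H\hat{\qH}+\alpha\qI_N$ and $\qA_{[k]} \triangleq \hat{\qH}_{[k]}^H\hat{\qH}_{[k]}+\alpha\qI_N$, the precoder is $\hat{\qW}=\qA^{-1}$ (Hermitian) and $\qA = \qA_{[k]} + \hat{\qh}_k\hat{\qh}_k^H$. The entry point is the purely algebraic identity $\hat{\qH}_{[k]}^H\hat{\qH}_{[k]} = \qA - \alpha\qI_N - \hat{\qh}_k\hat{\qh}_k^H$, which splits the interference power into three quadratic forms,
\begin{equation*}
\qh_k^H\qA^{-1}\hat{\qH}_{[k]}^H\hat{\qH}_{[k]}\qA^{-1}\qh_k = \qh_k^H\qA^{-1}\qh_k - \alpha\,\qh_k^H\qA^{-2}\qh_k - \left|\qh_k^H\qA^{-1}\hat{\qh}_k\right|^2 .
\end{equation*}
The third term is exactly the squared signal amplitude of Lemma \ref{Lemma second term}, whose deterministic equivalent is $(\overline{u}^{(2)}_k)^2/(1+\overline{u}^{(1)}_k)^2$, so it is already in hand.

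For the first form $\qh_k^H\qA^{-1}\qh_k$, the difficulty is that $\qh_k=\qT_k^{\frac{1}{2}}\qx_k$ and $\hat{\qh}_k=\qT_k^{\frac{1}{2}}(\qLambda_k\qx_k+\qOmega_k\qv_k)$ both depend on $\qx_k$, which also sits inside $\qA$. First I would strip this dependence with the matrix inversion lemma \cite[Lemma 2.1]{Bai-09}, expressing $\qA^{-1}$ through the independent matrix $\qA_{[k]}$, so that $\qh_k^H\qA^{-1}\qh_k = \qh_k^H\qA_{[k]}^{-1}\qh_k - |\qh_k^H\qA_{[k]}^{-1}\hat{\qh}_k|^2/(1+\hat{\qh}_k^H\qA_{[k]}^{-1}\hat{\qh}_k)$. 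Applying the trace lemma \cite[Lemma 2.3]{Bai-09} (the $\qx_k$–$\qv_k$ cross terms vanishing a.s.\ by independence), the rank-one perturbation lemma \cite[Lemma 2.2]{Bai-09}, and Theorem \ref{Th: 1}, the three constituent forms converge to $\overline{u}^{(1)}_k$, $\overline{u}^{(2)}_k$, and $\overline{u}^{(1)}_k$ respectively, giving $\qh_k^H\qA^{-1}\qh_k - (\overline{u}^{(1)}_k - (\overline{u}^{(2)}_k)^2/(1+\overline{u}^{(1)}_k)) \xrightarrow{a.s.} 0$, reusing the identifications $\overline{u}^{(1)}_k=\overline{m}_{\hat{\qH}^H\hat{\qH},\qT_k}$ and $\overline{u}^{(2)}_k=\overline{m}_{\hat{\qH}^H\hat{\qH},\qT_k^{\frac{1}{2}}\qLambda_k\qT_k^{\frac{1}{2}}}$ from Lemma \ref{Lemma second term}.

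The remaining term $\alpha\,\qh_k^H\qA^{-2}\qh_k$ is the crux, and I would dispatch it by differentiation rather than a fresh concentration argument. Since $\qh_k$ is independent of $\alpha$ and $\partial_\alpha\qA^{-1}=-\qA^{-2}$, one has $\qh_k^H\qA^{-2}\qh_k=-\partial_\alpha(\qh_k^H\qA^{-1}\qh_k)$. Both $\qh_k^H\qA^{-1}\qh_k$ and its deterministic equivalent from the previous step are analytic in $\alpha$ on $\bbR^+$ (indeed on a complex neighbourhood) and uniformly bounded on compact sets, so the a.s.\ convergence transfers to the derivatives by a normal-families/Vitali argument --- precisely the principle already invoked in Lemma \ref{Lemma first term} to pass from \eqref{eq: asymptotic expression m} to \eqref{eq: asymptotic expression dotm}. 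Differentiating the limit and using $\overline{\dot u}^{(1)}_k=-\partial_\alpha\overline{u}^{(1)}_k$ and $\overline{\dot u}^{(2)}_k=-\partial_\alpha\overline{u}^{(2)}_k$, which are exactly \eqref{eq: dotu1_circ}--\eqref{eq: dotu2_circ} (the derivative of $\qPsi_i$ supplying a $\qPsi_i^2$ term and the chain rule through the $e_{k,m}$'s supplying the $\dot{c}_{l,i}$ corrections), yields the deterministic equivalent of $\alpha\,\qh_k^H\qA^{-2}\qh_k$ in terms of $\overline{\dot u}^{(1)}_k$ and $\overline{\dot u}^{(2)}_k$.

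Finally I would add the three deterministic equivalents and simplify. Combining the $\alpha$-free part $\overline{u}^{(1)}_k-(\overline{u}^{(2)}_k)^2/(1+\overline{u}^{(1)}_k)$, the signal contribution $(\overline{u}^{(2)}_k)^2/(1+\overline{u}^{(1)}_k)^2$, and the differentiated contribution, a short rational-function simplification collapses everything into the single expression $\overline{u}_k$ of \eqref{eq: u_circ}. The hard part will be the differentiation step: rigorously justifying that the a.s.\ convergence of $\qh_k^H\qA^{-1}\qh_k$ may be differentiated in $\alpha$, i.e.\ supplying the boundedness-on-compacts estimate that powers the Vitali argument and keeping the chain-rule signs straight; once that is secured, the consolidation into \eqref{eq: u_circ} is routine deterministic calculus.
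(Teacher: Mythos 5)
Your argument is sound and arrives at the correct $\overline{u}_k$ (I checked that your three pieces recombine exactly into \eqref{eq: u_circ}, using the paper's convention that $\overline{\dot u}^{(j)}_k=-\partial_\alpha\overline{u}^{(j)}_k$), but it follows a genuinely different route from the paper's proof. The paper instead splits $\qA^{-1}=\qA_{[k]}^{-1}+(\qA^{-1}-\qA_{[k]}^{-1})$ and expands $\qA-\qA_{[k]}=\qT_k^{\frac{1}{2}}(\qLambda_k\qx_k\qx_k^H\qLambda_k+\qOmega_k\qv_k\qv_k^H\qOmega_k+\cdots)\qT_k^{\frac{1}{2}}$ as a rank-two perturbation in $(\qx_k,\qv_k)$; this produces the long expression \eqref{eq: interference power 2} with six quadratic forms in mixed resolvents $\qA_{[k]}^{-1}\qA^{-1}$, whose limits require the dedicated perturbation result Lemma \ref{Lemma 1} (the extension of \cite[Lemma 7]{Wagner12IT} that the paper singles out as the nontrivial multi-cell ingredient) together with auxiliary quantities $u^{(3)}_k,u^{(4)}_k,\dot u^{(1)}_k,\dots,\dot u^{(4)}_k$. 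Your identity $\hat{\qH}_{[k]}^H\hat{\qH}_{[k]}=\qA-\alpha\qI_N-\hat{\qh}_k\hat{\qh}_k^H$ bypasses all of that: the signal term is recycled from Lemma \ref{Lemma second term}, the term $\qh_k^H\qA^{-1}\qh_k$ needs only the rank-one inversion lemma applied to $\qA=\qA_{[k]}+\hat{\qh}_k\hat{\qh}_k^H$ (where $1+\hat{\qh}_k^H\qA_{[k]}^{-1}\hat{\qh}_k\to 1+\overline{u}^{(1)}_k$ because $\qLambda_k^2+\qOmega_k^2=\qI$), and the $\qA^{-2}$ term is obtained by differentiation. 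What your route buys is brevity and the elimination of Lemma \ref{Lemma 1} altogether; what it costs is that the whole lemma now leans on the Vitali/normal-families transfer of a.s.\ convergence to derivatives, which you must support with the a.s.\ local boundedness of $\qh_k^H\qA^{-1}\qh_k$ on a complex neighbourhood of $\bbR^+$ and with differentiability of the fixed point $e_{k,m}(\alpha)$ (i.e.\ invertibility of the system \eqref{eq: Theta}) --- the same level of justification the paper itself implicitly accepts when passing from \eqref{eq: asymptotic expression m} to \eqref{eq: asymptotic expression dotm} in Lemma \ref{Lemma first term}, so this is a legitimate, self-consistent alternative rather than a gap.
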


\begin{proof}
Consider the interference power $\qh_k^H \hat{\qW} \hat{\qH}_{[k]}^H \hat{\qH}_{[k]} \hat{\qW} \qh_k$ by writing it as
\begin{equation}\label{eq: interference power 1}
    \qh_k^H \hat{\qW} \hat{\qH}_{[k]}^H \hat{\qH}_{[k]} \hat{\qW} \qh_k = \qx_k^H \qT_k^{\frac{1}{2}} \qA^{-1} \hat{\qH}_{[k]}^H \hat{\qH}_{[k]} \qA^{-1} \qT_k^{\frac{1}{2}} \qx_k.
\end{equation}
In order to eliminate the dependence between $\qx_k$ and $\qA$ in \eqref{eq: interference power 1}, recall the definition:
\begin{align}\label{eq: A}
    \qA &= \hat{\qH}^H \hat{\qH} + \alpha \qI_M \nonumber\\
        &= \qA_{[k]} + \qT_k^{\frac{1}{2}} \biggl( \qLambda_k \qx_k \qx_k^H \qLambda_k + \qOmega_k \qv_k \qv_k^H \qOmega_k  \nonumber\\
         &\quad + \qLambda_k \qx_k \qv_k^H \qOmega_k + \qOmega_k \qv_k \qx_k^H \qLambda_k \biggr) \qT_k^{\frac{1}{2}}.
\end{align}
With the above, we can rewrite \eqref{eq: interference power 1} as \eqref{eq: interference power 2}, shown at the top of next page.
\begin{figure*}[!t]
\begin{align} \label{eq: interference power 2}
 \qh_k^H \hat{\qW} \hat{\qH}_{[k]}^H \hat{\qH}_{[k]} \hat{\qW} \qh_k =& \qx_k^H \qT_k^{\frac{1}{2}} \qA_{[k]}^{-1} \hat{\qH}_{[k]}^H \hat{\qH}_{[k]} \qA^{-1} \qT_k^{\frac{1}{2}} \qx_k + \qx_k^H \qT_k^{\frac{1}{2}} \left( \qA^{-1} - \qA_{[k]}^{-1} \right) \hat{\qH}_{[k]}^H \hat{\qH}_{[k]} \qA^{-1} \qT_k^{\frac{1}{2}} \qx_k \nonumber\\
  =& \qx_k^H \qT_k^{\frac{1}{2}} \qA^{-1} \qT_k^{\frac{1}{2}} \qx_k - \alpha \qx_k^H \qT_k^{\frac{1}{2}} \qA_{[k]}^{-1} \qA^{-1} \qT_k^{\frac{1}{2}} \qx_k \nonumber\\
   &  - \qx_k^H \qT_k^{\frac{1}{2}} \qA^{-1} \qT_k^{\frac{1}{2}} \qLambda_k \qx_k \left( \qx_k^H \qLambda_k \qT_k^{\frac{1}{2}} \qA^{-1} \qT_k^{\frac{1}{2}} \qx_k - \alpha \qx_k^H \qLambda_k \qT_k^{\frac{1}{2}} \qA_{[k]}^{-1} \qA^{-1} \qT_k^{\frac{1}{2}}\qx_k \right)  \nonumber\\
   &  - \qx_k^H \qT_k^{\frac{1}{2}} \qA^{-1} \qT_k^{\frac{1}{2}} \qOmega_k \qv_k  \left( \qv_k^H \qOmega_k  \qT_k^{\frac{1}{2}} \qA^{-1} \qT_k^{\frac{1}{2}} \qx_k - \alpha \qv_k^H \qOmega_k  \qT_k^{\frac{1}{2}} \qA_{[k]}^{-1} \qA^{-1} \qT_k^{\frac{1}{2}}\qx_k \right)  \nonumber\\
   &  - \qx_k^H \qT_k^{\frac{1}{2}} \qA^{-1} \qT_k^{\frac{1}{2}} \qLambda_k \qx_k \left( \qv_k^H \qOmega_k  \qT_k^{\frac{1}{2}} \qA^{-1} \qT_k^{\frac{1}{2}} \qx_k - \alpha \qv_k^H \qOmega_k  \qT_k^{\frac{1}{2}} \qA_{[k]}^{-1} \qA^{-1} \qT_k^{\frac{1}{2}}\qx_k \right)  \nonumber\\
   &  - \qx_k^H \qT_k^{\frac{1}{2}} \qA^{-1} \qT_k^{\frac{1}{2}} \qOmega_k \qv_k  \left( \qx_k^H \qLambda_k \qT_k^{\frac{1}{2}} \qA^{-1} \qT_k^{\frac{1}{2}} \qx_k - \alpha \qx_k^H \qLambda_k \qT_k^{\frac{1}{2}} \qA_{[k]}^{-1} \qA^{-1} \qT_k^{\frac{1}{2}}\qx_k \right)
\end{align}
\hrulefill
\end{figure*}
To calculate the deterministic equivalent of each term above, the following Lemma \ref{Lemma 1} extends \cite[Lemma 7]{Wagner12IT} to deal with the deterministic equivalent of the considered
multi-cell system with different CSIT qualities even inside a channel vector from a user to all BSs.

\begin{Lemma}\label{Lemma 1}
Let $\qA, \qD \in \bbC^{N\times N}$ be invertible matrices and $\qU, \qV, \qLambda, \qOmega \in \bbC^{N\times N}$ be of uniformly bounded spectral norm, and satisfy
\begin{equation*}
\qA = \qD +  \qLambda \qx \qx^H \qLambda^H + \qOmega \qv \qv^H \qOmega^H + \qLambda \qx \qv^H \qOmega^H + \qOmega \qv \qx^H \qLambda^H,
\end{equation*}
where $\qx, \qv \in \bbC^{N}$ have i.i.d.~zero-mean entries of variance of $\frac{1}{N}$ and finite 8-th order moment and are mutually independent as well as independent of $\qU, \qV, \qD$. Then we have, almost surely,
\begin{align}
    \qx^H \qU & \qA^{-1} \qV \qx - \biggl( \frac{1}{N} \tr \qV \qU \qD^{-1} \nonumber\\
    &-\frac{\frac{1}{N} \tr \qLambda \qU \qD^{-1} \frac{1}{N} \tr \qV \qLambda^H \qD^{-1} }{1 + \frac{1}{N} \tr \qLambda \qLambda^H \qD^{-1} + \frac{1}{N} \tr \qOmega\qOmega^H \qD^{-1} }\biggr) \xrightarrow{a.s.} 0,\label{eq: Lemma 1 results 1}
\end{align}
and
\begin{align}
    \qx^H \qU &\qA^{-1} \qV \qv  \nonumber\\
    &-\frac{- \frac{1}{N} \tr \qLambda \qU \qD^{-1} \frac{1}{N} \tr \qV \qOmega^H \qD^{-1} }{1 + \frac{1}{N} \tr \qLambda \qLambda^H \qD^{-1} + \frac{1}{N} \tr \qOmega \qOmega^H \qD^{-1} }   \xrightarrow{a.s.} 0.\label{eq: Lemma 1 results 2}
\end{align}
\end{Lemma}

\begin{proof}
Using the fact that $\qA^{-1}-\qD^{-1}=-\qA^{-1}(\qA-\qD)\qD^{-1}$, we have 
\begin{align}\label{eq: xx}
     & \qx^H \qU \qA^{-1} \qV \qx - \qx^H \qU \qD^{-1} \qV \qx \nonumber \\
    =& -\qx^H \qU \qA^{-1} (\qA-\qD) \qD^{-1} \qV \qx  \nonumber \\
    =& -\qx^H \qU \qA^{-1} \bigl(\qLambda \qx \qx^H \qLambda^H + \qOmega \qv \qv^H \qOmega^H + \qLambda \qx \qv^H \qOmega^H \nonumber \\
      &+ \qOmega \qv \qx^H \qLambda^H \bigr) \qD^{-1} \qV \qx \nonumber \\
    =& -\bigl(\qx^H \qU \qA^{-1} \qLambda \qx + \qx^H \qU \qA^{-1} \qOmega \qv\bigr) \bigl(\qx^H \qLambda^H \qD^{-1} \qV \qx \nonumber \\
    &+ \qv^H \qOmega^H \qD^{-1} \qV \qx \bigr),
\end{align}
and
\begin{align}\label{eq: xv}
    & \qx^H \qU \qA^{-1} \qV \qv - \qx^H \qU \qD^{-1} \qV \qv \nonumber \\
  = & -\bigl(\qx^H \qU \qA^{-1} \qLambda \qx + \qx^H \qU \qA^{-1} \qOmega \qv\bigr) \bigl(\qx^H \qLambda^H \qD^{-1} \qV \qv \nonumber \\
    & + \qv^H \qOmega^H \qD^{-1} \qV \qv\bigr).
\end{align}
Applying the same method to $\qx^H \qU \qA^{-1} \qLambda \qx$ and $\qx^H \qU \qA^{-1} \qOmega \qv$, we have 
\begin{align}\label{eq: xx2}
    & \qx^H \qU \qA^{-1} \qLambda \qx - \qx^H \qU \qD^{-1} \qLambda \qx \nonumber \\
  = & -\bigl(\qx^H \qU \qA^{-1} \qLambda \qx + \qx^H \qU \qA^{-1} \qOmega \qv\bigr) \bigl(\qx^H \qLambda^H \qD^{-1} \qLambda \qx \nonumber \\
    &+ \qv^H \qOmega^H \qD^{-1} \qLambda \qx \bigr),
\end{align}
and
\begin{align}\label{eq: xv2}
    & \qx^H \qU \qA^{-1}  \qOmega \qv - \qx^H \qU \qD^{-1} \qOmega \qv \nonumber \\
   =& -\bigl(\qx^H \qU \qA^{-1} \qLambda \qx + \qx^H \qU \qA^{-1} \qOmega \qv \bigr) \bigl(\qx^H \qLambda^H \qD^{-1} \qOmega \qv \nonumber \\
    & + \qv^H \qOmega^H \qD^{-1} \qOmega \qv \bigr).
\end{align}
From \eqref{eq: xx2} and \eqref{eq: xv2}, we can derive the solutions of $\qx^H \qU \qA^{-1} \qLambda \qx$ and $\qx^H \qU \qA^{-1} \qOmega \qv$. Thus, we have
\begin{align}
    \qx^H \qU &\qA^{-1} \qLambda \qx \nonumber \\
    &- \frac{\frac{1}{N} \tr \qLambda \qU \qD^{-1}\left( 1 + \frac{1}{N} \tr \qOmega \qOmega^H \qD^{-1} \right)} {1 + \frac{1}{N} \tr \qLambda \qLambda^H \qD^{-1} + \frac{1}{N} \tr \qOmega \qOmega^H \qD^{-1} } \xrightarrow{a.s.} 0,\label{eq: xx3}
\end{align}
\begin{align}
    \qx^H \qU &\qA^{-1} \qOmega \qv \nonumber \\
    &- \frac{- \frac{1}{N} \tr \qLambda \qU \qD^{-1} \frac{1}{N} \tr \qOmega \qOmega^H \qD^{-1} }{1 + \frac{1}{N} \tr \qLambda \qLambda^H \qD^{-1} + \frac{1}{N} \tr\qOmega \qOmega^H \qD^{-1} } \xrightarrow{a.s.} 0.\label{eq: xv3}
\end{align}
Substituting \eqref{eq: xx3} and \eqref{eq: xv3} into \eqref{eq: xx} and \eqref{eq: xv}, we obtain \eqref{eq: Lemma 1 results 1} and \eqref{eq: Lemma 1 results 2}.
\end{proof}

Now, using Lemma \ref{Lemma 1}, we can easily get
\begin{align}
   \qx_k^H \qT_k^{\frac{1}{2}} \qA^{-1} \qT_k^{\frac{1}{2}} \qx_k - \left( u^{(1)}_k - \frac{\left(u^{(2)}_k \right)^2}{1+u^{(3)}_k+u^{(4)}_k} \right)
   &\xrightarrow{a.s.} 0,\label{eq: xTATx} \\
   \qx_k^H \qT_k^{\frac{1}{2}} \qA_{[k]}^{-1} \qA^{-1} \qT_k^{\frac{1}{2}} \qx_k - \left( \dot{u}^{(1)}_k - \frac{u^{(2)}_k \dot{u}^{(2)}_k}{1+u^{(3)}_k+u^{(4)}_k} \right)
   &\xrightarrow{a.s.} 0,\label{eq: xTAkATX} \\
    \qx_k^H \qT_k^{\frac{1}{2}} \qA^{-1} \qT_k^{\frac{1}{2}} \qLambda_k \qx_k - \frac{u^{(2)}_k (1+u^{(4)}_k)}{1+u^{(3)}_k+u^{(4)}_k}
    &\xrightarrow{a.s.} 0,\label{eq: xTATLx} \\
    \qx_k^H \qT_k^{\frac{1}{2}} \qA^{-1} \qT_k^{\frac{1}{2}} \qOmega_k \qv_k - \frac{- u^{(2)}_k u^{(4)}_k}{1+u^{(3)}_k+u^{(4)}_k}
    &\xrightarrow{a.s.} 0,\label{eq: xTATOv}  \\
    \qx_k^H \qLambda _k \qT_k^{\frac{1}{2}} \qA_{[k]}^{-1} \qA^{-1} \qT_k^{\frac{1}{2}} \qx_k ~~~~~~~~~~~~~~~~~~~~~~~~~~~~~& \nonumber \\
     - \left( \dot{u}^{(2)}_k - \frac{u^{(2)}_k \dot{u}^{(3)}_k}{1+u^{(3)}_k+u^{(4)}_k} \right)
    &\xrightarrow{a.s.} 0,\label{eq: xLTAkATx} \\
    \qv_k^H \qOmega_k \qT_k^{\frac{1}{2}} \qA_{[k]}^{-1} \qA^{-1} \qT_k^{\frac{1}{2}} \qx_k - \frac{- u^{(2)}_k \dot{u}^{(4)}_k}{1+u^{(3)}_k+u^{(4)}_k}
    &\xrightarrow{a.s.} 0,\label{eq: vOTAkATx}
\end{align}
where $u^{(3)}_k = \frac{1}{N}\tr ( \qT^{\frac{1}{2}}_k \qLambda_k^2  \qT^{\frac{1}{2}}_k \qA_{[k]}^{-1})$,
 $u^{(4)}_k = \frac{1}{N}\tr ( \qT^{\frac{1}{2}}_k \qOmega_k^2  \qT^{\frac{1}{2}}_k \qA_{[k]}^{-1})$,
 $\dot{u}^{(1)}_k = \frac{1}{N}\tr ( \qT_k \qA_{[k]}^{-2})$,
 $\dot{u}^{(2)}_k = \frac{1}{N}\tr ( \qT^{\frac{1}{2}}_k \qLambda_k \qT^{\frac{1}{2}}_k \qA_{[k]}^{-2})$,
 $\dot{u}^{(3)}_k = \frac{1}{N}\tr ( \qT^{\frac{1}{2}}_k \qLambda_k^2 \qT^{\frac{1}{2}}_k \qA_{[k]}^{-2})$, and
 $\dot{u}^{(4)}_k = \frac{1}{N}\tr ( \qT^{\frac{1}{2}}_k \qOmega_k^2 \qT^{\frac{1}{2}}_k \qA_{[k]}^{-2})$.
Substituting \eqref{eq: xTATx}--\eqref{eq: vOTAkATx} into \eqref{eq: interference power 2}, using again the fact that $\frac{1}{N}\tr (\qQ \qA_{[k]}^{-1} )
-\frac{1}{N}\tr(\qQ \qA^{-1}) \xrightarrow{a.s.} 0$, and Theorem \ref{Th: 1}, we obtain \eqref{eq: asymptotic expression interference power}.
\end{proof}

According Lemmas \ref{Lemma first term}, \ref{Lemma second term} and \ref{Lemma third term}, we obtain the deterministic equivalent $\overline{\gamma}_{k}$ of $\gamma_{k}$ in \eqref{eq:gamma deterministic equivalent}.

\section*{Acknowledgment}
We thank the reviewers and the associate editor for the careful reviews and for their suggestions which helped in improving the quality of the paper.

\bibliographystyle{IEEEtran}

\end{document}